\newtheorem{assumption}{Assumption}
\newtheorem{theorem}{Theorem}
\newtheorem{definition}{Definition}
\newtheorem{proposition}{Proposition}
\newtheorem{prob}{Problem}
\newtheorem{rem}{Remark}
\title{A Smooth Penalty-Based Feedback Law for Reactive Obstacle Avoidance with Convergence Guarantees}
\date{} 					
\author{ Lyes Smaili\\
	Department of Computer Science and Engineering\\
	University of Quebec in Outaouais\\
	101 St-Jean Bosco, Gatineau, QC, J8X 3X7, Canada \\
	\texttt{smal01@uqo.ca}\\
	\And
	 Soulaimane Berkane\thanks{Soulaimane Berkane is also with the Department of Electrical Engineering, Lakehead University, Thunder Bay, ON P7B 5E1, Canada.}\\
	Department of Computer Science and Engineering\\
	University of Quebec in Outaouais\\
	101 St-Jean Bosco, Gatineau, QC, J8X 3X7, Canada \\
	\texttt{soulaimane.berkane@uqo.ca}\\
}
\begin{document}
\maketitle

\begin{abstract}
	This paper addresses the problem of safe autonomous navigation in unknown obstacle-filled environments using only local sensory information. We propose a smooth feedback controller derived from an unconstrained penalty-based formulation that guarantees safety by construction. The controller modifies an arbitrary nominal input through a closed-form expression. The resulting closed-form feedback has a projection structure that interpolates between the nominal control and its orthogonal projection onto the obstacle boundary, ensuring forward invariance of a user-defined safety margin. The control law depends only on the distance and bearing to obstacles and requires no map, switching, or set construction. When the nominal input is a gradient descent of a navigation potential, we prove that the closed-loop system achieves almost global asymptotic stability (AGAS) to the goal. Undesired equilibria are shown to be unstable under a mild geometric curvature condition, which compares the normal curvature of the obstacle boundary with that of the potential level sets. We refer to the proposed method as SPF (Safe Penalty-based Feedback), which ensures safe and smooth navigation with minimal computational overhead, as demonstrated through simulations in complex 2D and 3D environments.
\end{abstract}

\keywords{Reactive navigation
;
Obstacle avoidance
;
Penalty-based control
;
Smooth feedback
;
Safety-critical systems. }

\section{Introduction}
\subsection{Motivation}
Autonomous navigation in cluttered and unknown environments remains a fundamental challenge in robotics, with critical applications in ground, aerial, and underwater platforms. A key objective is to avoid collisions in real time using only local sensor information, without relying on prior maps or global planning. Although significant progress has been made in designing reactive navigation strategies, existing approaches often fall short of simultaneously ensuring simplicity, smoothness, and global guarantees. Many rely on complex logic, discontinuities, or case-based switching that hinder their integration into real-time systems with higher-order dynamics. This motivates the development of a purely reactive, closed-form controller that is both smooth and easy to implement, while still ensuring safety and global convergence to the goal. Our work proposes such a solution, using an unconstrained penalty-based feedback that blends seamlessly with the nominal control law and adapts continuously to obstacle proximity.

\subsection{Related Work}
Various methods have been developed to address the challenges of autonomous navigation. A broad class of these methods assumes prior knowledge of the environment and falls under the umbrella of \emph{global} or \emph{map-based} approaches. These include discrete path-finding methods based on grid or graph representations of the environment (e.g., Dijkstra, A*), as well as sampling-based algorithms such as Rapidly Exploring Random Trees (RRTs) and Probabilistic Roadmaps (PRMs). Feedback motion planning techniques also belong to this class \citep{lavalle2006planning}.

To overcome local minima issues in classical artificial potential fields \citep{khatib1986,Koditschek1987}, the \emph{navigation function} framework was introduced for topologically simple environments such as Euclidean sphere worlds \citep{KoditschekRimon1992}. This approach has been extended to multi-agent systems \citep{Tanner2005FormationSO,Dimarogonas2006agents}. While navigation functions typically require careful parameter tuning to eliminate local minima, newer techniques such as \emph{navigation transforms} \citep{Loizou2017} and \emph{prescribed performance control} \citep{Vrohidis2018} address this issue without the need for such tuning. Hybrid feedback approaches have also been proposed to eliminate undesired equilibria and ensure global asymptotic stability, as explored in \citep{sanfelice2006robust,berkane2019hybrid,casau2019hybrid,berkane2021obstacle}. However, a common limitation of all these approaches is their reliance on complete knowledge of the environment to guarantee both safety and convergence.

Reactive methods have been developed to enable navigation in \emph{unknown} environments, with a stronger emphasis on the autonomy and adaptability of robotic systems. Early strategies such as the \emph{Bug algorithms} \citep{Bug1986,Bug2005} offer intuitive solutions but are difficult to generalize to higher-dimensional environments. For topologically simple obstacle configurations, navigation function-based approaches have been extended to unknown settings in \citep{Lionis2007,Filippidis2011}. The sensor-based method in \citep{arslan2019sensor} uses separating hyperplanes to identify the locally free space in which the robot can navigate safely, providing almost-global asymptotic stability (AGAS) guarantees under strong convexity assumptions. Hybrid feedback controllers for unknown planar environments have also been explored in \citep{sawant2023hybrid,Sawant2023,cheniouni2024hybrid,sawant2024n}, achieving safe navigation around convex and non-convex obstacles. While effective in their respective settings, these approaches are typically only continuous and rely on switching logic or complex geometric constructions from sensor data.

\emph{Control barrier functions} (CBFs) provide another reactive framework for enforcing safety in real time, particularly in safety-critical systems \citep{WIELAND2007462}. These methods encode safety constraints that are enforced via online optimization, typically through quadratic programs (QPs). The combination of CBFs with control Lyapunov functions (CLFs) in a unified QP-based formulation has been explored in \citep{Ames2017}. However, it has been shown in \citep{Reis2021} that QP-based controllers with CBF–CLF constraints may introduce undesired asymptotically stable equilibria. Extensions to address such limitations include optimization-based reactive control via quasi-conformal mappings \citep{Notomista2025} and robust safety-critical control using reduced-order models for high-dimensional systems \citep{Molnar2023}. Despite these advances, the complexity of the required geometric mappings and optimization formulations suggests that further work is needed toward simple and scalable feedback designs.

\subsection{Contributions and Organization of the Paper}
Building on the limitations of existing methods for reactive navigation, we propose a unified framework that guarantees both safety and convergence using only local information. This work is motivated by our earlier development of the Safety Velocity Cone (SVC) method \citep{berkane2021Navigation,Smaili2024}, which uses set invariance principles \citep{Nagumo} to achieve safe navigation in arbitrary-dimensional environments based solely on local sensing. While effective in ensuring safety, the resulting controller in \citep{Smaili2024} is only Lipschitz-continuous, and thus less compatible with systems requiring differentiable feedback. In this paper, we extend this approach by formulating a smooth controller as the solution to an unconstrained penalty-based optimization problem, yielding a unified and tunable feedback structure.

The contributions of this work are fourfold. First, we derive an arbitrarily smooth feedback law, referred to as SPF (Safe Penalty-based Feedback), from an unconstrained penalty-based formulation. In contrast to hybrid and set-based approaches, which typically yield only at best continuous controllers \citep{arslan2019sensor,sawant2023hybrid,Cheniouni2025}, the proposed design ensures smoothness, making it well suited for integration with systems involving higher-order dynamics or trajectory tracking. Note that, in contrast to classical penalty-based methods \citep{mestres2022optimization}, which rely on a fixed penalty parameter, our formulation introduces a state-dependent penalty scaling function that adapts continuously to the robot's position and orientation relative to nearby obstacles. This structure enhances responsiveness near boundaries while preserving nominal performance in free space.
Second, the proposed feedback depends only on two quantities: the distance to the closest obstacle and the bearing relative to its boundary, and avoids the need for prior maps, switching logic, or geometric preprocessing. This results in a simple and lightweight solution amenable to real-time implementation. Third, safety is guaranteed for any nominal input, and when the nominal controller is a gradient descent of a potential function, we prove almost global asymptotic stability (AGAS) of the closed-loop system. The resulting law acts as a minimally invasive modification that preserves the goal-seeking behavior of the original policy. Fourth, we establish instability of all undesired equilibria under a relaxed geometric curvature condition. Unlike earlier strong-curvature assumptions \citep{arslan2019sensor,koditschek2017,Cheniouni2025}, our condition requires only the existence of a single tangent direction satisfying the inequality, making it significantly less conservative—especially in high-dimensional environments.  We validate the performance of the proposed SPF approach through simulations in complex 2D and 3D environments.

The structure of the paper reflects the main components of the proposed framework. Section~\ref{section:Notation} introduces the notation and geometric preliminaries. Section~\ref{section:Problem formulation} defines the navigation problem in unknown environments and formulates the feasibility conditions based on minimal local sensing. Section~\ref{section:main results} presents the main theoretical contributions: a smooth penalty-based feedback law that guarantees safety for arbitrary nominal inputs, preserves goal-seeking behavior, and ensures almost global asymptotic stability under a simple geometric condition—namely, that the obstacle boundary is locally more curved than the level sets of the potential function. Section~\ref{section:Simulation results} provides numerical validation in complex 2D and 3D environments. Section~\ref{section: Conclusion} concludes with a summary and future research directions.

\section{Notation}\label{section:Notation}
Let $\mathbb{R}$, $\mathbb R_{>0}$, and $\mathbb N$ denote the set of reals, positive reals, and natural numbers, respectively. Let $\mathbb R^n$ be the $n$-dimensional Euclidean space, with $n\in\mathbb N$. Let $\|.\|$ denote the Euclidean norm operator. For a subset $\mathcal{A}\subset\mathbb R^n$, let $\textbf{int}(\mathcal{A})$, $\partial\mathcal{A}$, $\overline{\mathcal{A}}$ and $\complement\mathcal{A}$ denote the topological interior, boundary, closure and complement of $\mathcal{A}$ in $\mathbb R^n$, respectively. For a function $f(x):\mathbb{R}^n\to\mathbb{R}$, let $\nabla f(x)$ and $\nabla^2f(x)$ denote the gradient and Hessian of $f$ with respect to $x$, respectively. For a vector field $F(x):\mathbb{R}^n\to\mathbb{R}^m$, let $\nabla F(x)$ represent the Jacobian of $F$ with respect to $x$. Define the distance from a point $x\in\mathbb R^n$ to a closed set $\mathcal{A}\subset\mathbb R^n$ as
$d_\mathcal{A}(x):=\inf _{y\in\mathcal{A}}\|y - x\|$. The projection of a point $x\in\mathbb{R}^n$ onto a set $\mathcal{A}\subset\mathbb R^n$ is denoted by 
$\mathbf{P}_\mathcal{A}(x):=\{y\in\overline{\mathcal{A}}: \|y-x\|=d_\mathcal{A}(x)\}$. The Euclidean ball of radius $r>0$ centered at $x$ is defined as $\mathcal{B}(x,r):=\{y\in\mathbb R^n: \|x-y\|<r\}$. Let $\mathcal{A}\subseteq\mathbb{R}^n$ be an open set. A function $f(x)$ defined on the set $\overline{\mathcal{A}}$ is of class $\mathcal{C}^k$, if it is $k$-times continuously differentiable on $\mathcal{A}$, \textit{i.e.,} all partial derivatives of $f$ up to order $k$ exist and are continuous in $\mathcal{A}$.

\section{Problem formulation}\label{section:Problem formulation}
We consider a ball-shaped robot of radius \( R > 0 \), whose center is located at \( x \in \mathbb{R}^n \). The robot operates in a workspace denoted by \( \mathcal{W} \subset \mathbb{R}^n \), which is assumed to be a subset of the Euclidean space. Let \( \mathcal{O}_i \subset \mathcal{W} \), for \( i \in \{1, \ldots, m\} \), denote \( m \) closed subsets representing the obstacles present in the workspace. Let $\mathcal{X}$ be the {\it free space} given by
\begin{equation}
    \mathcal{X}:=\mathcal{W}\setminus\bigcup_{i=1}^{m}\mathbf{int}(\mathcal{O}_i).
\end{equation}
The obstacle region is given by the complement of the free space $\complement\mathcal{X}$. Consider the distance function \( d_{\complement\mathcal{X}}(x) \), which measures the Euclidean distance from a point \( x \in \mathbb{R}^n \) to the obstacle set \( \complement\mathcal{X} \). The following assumption imposes a regularity condition on this function in a neighborhood of the boundary of the free space.

\begin{assumption}[Smoothness of the distance]\label{assumption:smoothBoundaries}
Given the free space \( \mathcal{X} \), there exists a constant \( \rho > R \) such that the distance function \( d_{\complement\mathcal{X}} \) is of class \( \mathcal{C}^k \) on the set
$\{ x \in \mathcal{X} \mid R \leq d_{\complement\mathcal{X}}(x) < \rho \}.$
\end{assumption}
This assumption does not require the boundary \( \partial\mathcal{X} \) to be smooth. Instead, it postulates the existence of a tubular region surrounding the boundary in which the distance function is sufficiently regular. The lower bound $R$ accounts for the robot’s physical size, ensuring that the center of the robot remains at least a distance $R$ away from obstacles. The parameter \( k \) specifies the degree of smoothness needed to define differential quantities such as the gradient and Hessian of \( d_{\complement\mathcal{X}} \), which are central to the design and analysis of the smooth feedback laws developed in this work. The local regularity of the distance function depends on geometric properties of the obstacle set, including the class of the set $\mathcal{X}$. While a full discussion of these geometric conditions is beyond the scope of this paper, we refer the reader to \cite{delfour2011shapes} for a comprehensive treatment of the regularity properties of distance functions and related projection operators.

When the projection of a point \( x \in \mathcal{X} \) onto the boundary \( \partial\mathcal{X} \) is unique, the distance function is differentiable at \( x \), and its gradient is given by
\begin{equation}\label{eq:gradientOfdistance}
    \nabla d_{\complement\mathcal{X}}(x) =
    \frac{x - \mathbf{P}_{\partial\mathcal{X}}(x)}{\|x - \mathbf{P}_{\partial\mathcal{X}}(x)\|}, \quad x \in \mathbf{int}(\mathcal{X}).
\end{equation}
This expression defines a unit vector pointing outward from the obstacle set. To formalize the region where this projection is well defined, we introduce the following assumption:
\begin{assumption}[Uniqueness of projection]\label{assumption:uniqeProjection}
There exists a constant \( h > R \) such that for all \( x \in \mathcal{X} \) satisfying \( d_{\complement\mathcal{X}}(x) < h \), the projection \( \mathbf{P}_{\partial\mathcal{X}}(x) \) is unique.
\end{assumption}
This condition guarantees that, in a region surrounding the obstacles and wide enough to accommodate the robot’s body, the projection map is single-valued and continuous, ensuring that the gradient of the distance function is well defined. See Fig.~\ref{fig:notationsAndAssumptions} for a schematic illustration.
\begin{figure}[h!]
    \centering
    \includegraphics[width=0.5\linewidth]{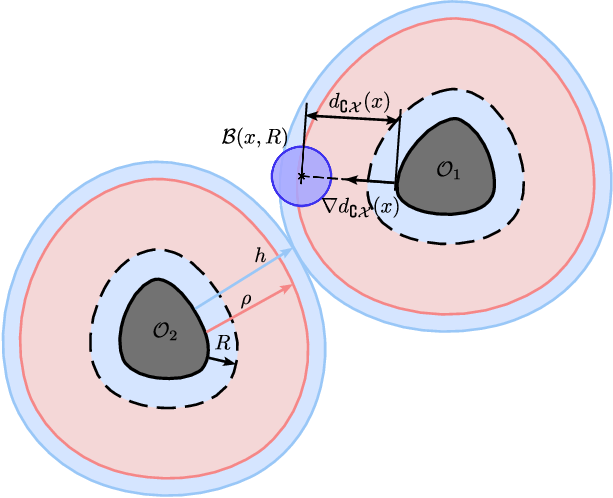}
\caption{Illustration of Assumptions \ref{assumption:smoothBoundaries}--\ref{assumption:uniqeProjection}. The (gray) shapes represent the obstacles, while the (blue) disk denotes the robot, modeled as a ball of radius \( R \) centered at \( x \). The (red) layer around the obstacles conceptually indicates the region where the distance function \( d_{\complement\mathcal{X}} \) is of class \( \mathcal{C}^k \), as required in Assumption~\ref{assumption:smoothBoundaries}. The (light blue) region shows where the projection \( \mathbf{P}_{\partial\mathcal{X}}(x) \) is guaranteed to be unique, as stated in Assumption~\ref{assumption:uniqeProjection}. The (dashed) annular layer illustrates that the robot's radius \( R \) is small enough to ensure that its center remains in the regions satisfying the assumptions.}

    \label{fig:notationsAndAssumptions}
\end{figure}

Let $\epsilon\in\mathbb{R}_{>0}$ be a design parameter that represents a safety margin from the obstacle set. We define the \textit{practical free space} $\mathcal{X}_\epsilon$ as the erosion of the environment $\mathcal{X}$ by a ball of radius $(R+\epsilon)$, that is

\begin{equation}\label{eq:practicalfreespace}
    \mathcal{X}_{\epsilon}:=\{x\in\mathbb{R}^n: d_{\complement\mathcal{X}}(x)\ge R+\epsilon\}\subset\mathcal{X}.
\end{equation}
To ensure that the smoothness and projection properties postulated in Assumptions~\ref{assumption:smoothBoundaries}--\ref{assumption:uniqeProjection} hold in a neighborhood of the boundary of $\mathcal{X}_\epsilon$, we impose the following feasibility condition on $\epsilon$:
\begin{equation}\label{condition:1}
    0<\epsilon<\min(h,\rho)-R.
\end{equation}
This condition can always be satisfied by selecting a sufficiently small \( \epsilon > 0 \), thanks to Assumptions~\ref{assumption:smoothBoundaries}--\ref{assumption:uniqeProjection}, which ensure that the quantities \( \rho \) and \( h \) are strictly greater than \( R \). To streamline notation, we define the following shorthand:
\begin{align}
    d(x) &:= d_{\complement\mathcal{X}}(x) - (R + \epsilon),\\
    \eta(x) &:= \nabla d_{\complement\mathcal{X}}(x),\\
    \mathbf{H}_d(x) &:= \nabla^2 d_{\complement\mathcal{X}}(x).
\end{align}
Finally, the robot is modeled as a first-order system evolving in the practical free space \( \mathcal{X}_\epsilon \), with dynamics
\begin{equation}\label{eq:dynamicalSystem}
    \dot{x} = u,
\end{equation}
where \( x \in \mathbb{R}^n \) denotes the robot’s center and \( u \in \mathbb{R}^n \) is the velocity control input. In our formulation, the only information about the environment required for obstacle avoidance is the distance \( d(x) \) and the unit normal vector \( \eta(x) \), both of which can be obtained from onboard sensors such as LiDAR, depth cameras, or stereo vision, using standard geometric processing techniques. The problem is then stated as follows.
\begin{prob}\label{prob:1}
Given the system \eqref{eq:dynamicalSystem}, and under Assumptions~\ref{assumption:smoothBoundaries}--\ref{assumption:uniqeProjection}, design a smooth feedback controller \( u = \kappa(x, x_d, d(x), \eta(x)) \), such that the closed-loop system
\begin{equation}\label{eq:closedLoopsystem}
    \dot{x} = \kappa(x, x_d, d(x), \eta(x))
\end{equation}
ensures that the practical free space \( \mathcal{X}_\epsilon \) is forward invariant and that the desired target position \( x_d \in \operatorname{int}(\mathcal{X}_\epsilon) \) is \emph{almost globally asymptotically stable (AGAS)}.
\end{prob}

The control policy \( \kappa \) must rely only on the robot’s current position \( x \), the target \( x_d \), and the local geometric quantities \( d(x) \) and \( \eta(x) \). For simplicity, we henceforth write \( \kappa(x) \) in place of the full expression when the dependence on \( x_d \), \( d(x) \), and \( \eta(x) \) is clear from context.

\section{Main Results}\label{section:main results}
Our goal is to ensure safe robot navigation within the practical free space $\mathcal{X}_\epsilon$ by guaranteeing forward invariance of this set and convergence to the desired target position $x_d$. Additionally, we seek to design a controller that deviates minimally from the nominal control input $\kappa_0(x)$-a property known as \emph{minimal invasiveness}.

To achieve this, we formulate a control strategy based on the solution of the following unconstrained optimization problem:
\begin{equation}\label{eq:optimization}
\min_u \frac{1}{2}||u-\kappa_0(x)||^2 + \frac{1}{2}\psi(d(x),s(x))(u^\top\eta(x))^2,
\end{equation}
where $\psi:\mathbb{R}\times\mathbb{R}\to\mathbb{R}_{\ge 0}$ is a $\mathcal{C}^l$-class function and 
\begin{equation}
    s(x):=\kappa_0(x)^\top\eta(x)
\end{equation}
denotes the component of the nominal control in the direction of the obstacle gradient. The cost function consists of two terms: the first term $\frac{1}{2}\|u-\kappa_0(x)\|^2$ aims to minimize deviation from the nominal control $\kappa_0(x)$, while the second term introduces a directional penalty that discourages unsafe motion near obstacles by penalizing the projection of the control $u$ onto the obstacle gradient direction $\eta(x)$.  The scalar scaling function $\psi(\cdot)$ modulates the intensity of this penalty based on the robot's proximity to obstacles and the direction of the nominal control. The properties of this function are defined below:
\begin{definition}[$\mathcal{C}^l$-Penalty Scaling Function]\label{def:Penalty function}
Let $\mu > 0$ and $\nu > 0$ be two positive real parameters. A function $\psi : \mathbb{R} \times \mathbb{R} \to \mathbb{R}_{\ge 0}$ is called a \emph{$\mathcal{C}^l$-penalty scaling function} (with parameters $\mu$ and $\nu$)\footnote{For brevity, the dependence of $\psi$ on $\mu$ and $\nu$ is omitted in the notation.} if it satisfies the following properties:
\begin{enumerate}
    \item $\psi$ is of class $\mathcal{C}^l$ on $\mathbb{R} \times \mathbb{R}$;
    \item $\psi(d,s) = 0$ whenever $d \ge \mu$ or $s \ge \nu$;
    \item $\psi(d,s) \to +\infty$ as $d \to 0^{-}$ and $s \to 0^{-}$ simultaneously.
\end{enumerate}
\end{definition}

The conditions imposed on the penalty scaling function $\psi(d(x), s(x))$ ensure that the penalty vanishes whenever the robot is sufficiently far from the obstacles or when the nominal controller points away from them, \textit{i.e.}, when either $d(x) \ge \mu$ or $s(x) \ge \nu$. In this case, the solution of the optimization problem~\eqref{eq:optimization} remains close to the nominal control.

On the other hand, if the robot approaches an obstacle while the nominal controller points toward it, the penalty scaling function $\psi(d(x), s(x))$ grows rapidly. The term $(u^\top \eta(x))^2$ then acts to counterbalance this growth by driving the control direction $u$ to become increasingly orthogonal to $\eta(x)$, reducing the projection. In the limiting case where $\psi(d(x), s(x)) \to \infty$, the optimal control must satisfy $u^\top \eta(x) = 0$, ensuring that the system avoids further motion toward the obstacle. Intuitively, $\psi$ behaves like a \textit{directional shield}, activating only when the robot is facing the obstacle, rather than surrounding it entirely. This anisotropic behavior is illustrated in Figure~\ref{fig:psi}, where the penalty structure resembles a shield or helmet, deployed selectively in front of the obstacle to prevent unsafe penetration.

\begin{figure}[h!]
    \centering
    \begin{subfigure}[b]{0.48\linewidth}
        \centering
        \includegraphics[width=\textwidth]{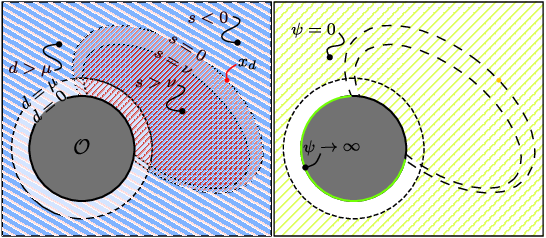}
        \caption{}
        \label{fig:2D psi vizualization}
    \end{subfigure}
    \hfill
    \begin{subfigure}[b]{0.48\linewidth}
        \centering
        \includegraphics[trim={1cm 0.25cm 1.3cm 1.5cm},clip,width=\textwidth]{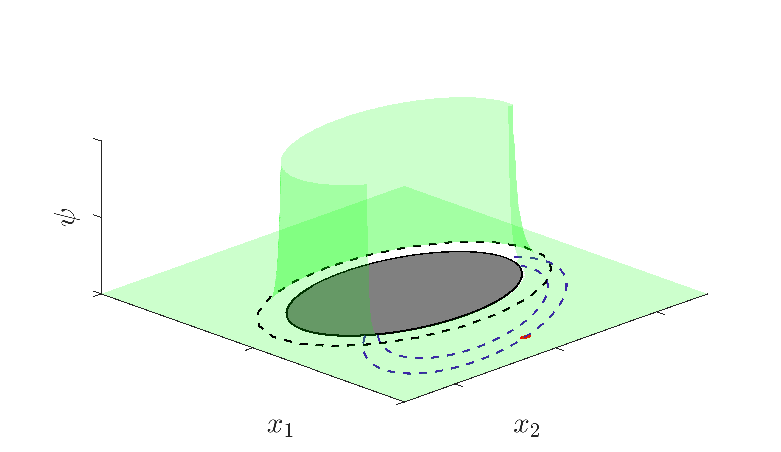}
        \caption{}
        \label{fig:3D psi vizualization}
    \end{subfigure}
 
    \caption{
(\subref{fig:2D psi vizualization}) Visualization of the penalty scaling function $\psi(d(x), s(x))$ with respect to $d(x)$ and $s(x)$. The left subfigure shows level sets of the distance function $d(x)$ (in blue) and the nominal control alignment $s(x)$ (in red), depicting their spatial variation around an obstacle. The right subfigure displays the corresponding values of $\psi(d(x), s(x))$, revealing its anisotropic behavior. As designed, $\psi(d(x), s(x)) = 0$ whenever $d(x) \ge \mu$ or $s(x) \ge \nu$, and grows unbounded as $d(x) \to 0$ with $s(x) \le 0$. In all other regions, $\psi$ takes smooth values in $(0, +\infty)$. This structure ensures that $\psi$ acts like a directional shield, activating only when the robot faces the obstacle-rather than surrounding it completely-effectively shaping the avoidance response as seen in subfigure (\subref{fig:3D psi vizualization}.)
}

    \label{fig:psi}
\end{figure}

Since $\psi(d, s)$ is a smooth function, the design parameters $\mu$ and $\nu$ define smooth transitions between the penalty-free and penalty-active regions, corresponding to distances in the range $(0, \mu)$ and directional projections in $(-\infty, \nu)$. To remain within the feasible set, the parameter $\mu$ must satisfy Assumptions~\ref{assumption:smoothBoundaries}--\ref{assumption:uniqeProjection}, which can be translated into the following condition:

\begin{equation}\label{condition:2}
   0<\mu<\min(h,\rho)-(R+\epsilon).
\end{equation}
Interestingly, the closed-form solution of the optimization problem defined in~\eqref{eq:optimization} is expressed as a smooth projection operation that acts linearly on the nominal control input.

\begin{proposition}\label{proposition:OP solution}
    Let $\mathcal{X}_\epsilon$ be the practical free space defined by~\eqref{eq:practicalfreespace}. The solution to the optimization problem~\eqref{eq:optimization} is given by:
    \begin{equation}\label{eq:smoothControl}
        u = \hat{\Pi}(x) \kappa_0(x),
    \end{equation}
    where the state-dependent projection matrix $\hat{\Pi}(x) \in \mathbb{R}^{n \times n}$ is defined as
    \begin{equation}\label{eq:projection matrix}
        \hat{\Pi}(x) = \mathbf{I}_n - \frac{\psi(d(x), s(x))}{1 + \psi(d(x), s(x))} \, \eta(x)\eta(x)^\top.
    \end{equation}
\end{proposition}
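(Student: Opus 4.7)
The plan is to recognize that the cost in \eqref{eq:optimization} is a strictly convex quadratic in $u$, so the unique minimizer is obtained by setting the gradient to zero. I would begin by writing the first-order optimality condition
\begin{equation*}
(u - \kappa_0(x)) + \psi(d(x),s(x))\,(u^\top\eta(x))\,\eta(x) \;=\; 0,
\end{equation*}
and observing that the Hessian $I_n + \psi(d(x),s(x))\,\eta(x)\eta(x)^\top$ is positive definite since $\psi\ge 0$. This guarantees both that the stationary point is the global minimizer and that the problem is well-posed.

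Next, I would rearrange the optimality condition as $u = \kappa_0(x) - \psi(d(x),s(x))\,(u^\top\eta(x))\,\eta(x)$, and take the inner product of both sides with $\eta(x)$ to obtain a scalar equation in the unknown $u^\top\eta(x)$. Here I would invoke the key geometric fact that $\|\eta(x)\|=1$ on the region of interest, which follows from \eqref{eq:gradientOfdistance} together with Assumption~\ref{assumption:uniqeProjection} and condition~\eqref{condition:1} (ensuring that $x$ lies in the tubular neighborhood where the projection onto $\partial\mathcal{X}$ is unique and hence $d_{\complement\mathcal{X}}$ is differentiable with unit gradient). Using $s(x)=\kappa_0(x)^\top\eta(x)$, this yields
\begin{equation*}
(1 + \psi(d(x),s(x)))\,u^\top\eta(x) \;=\; s(x),
\end{equation*}
hence $u^\top\eta(x) = s(x)/(1+\psi(d(x),s(x)))$.

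Finally, I would substitute this back into the expression for $u$ and factor out $\kappa_0(x)$, which gives
\begin{equation*}
u \;=\; \kappa_0(x) - \frac{\psi(d(x),s(x))}{1+\psi(d(x),s(x))}\,(\eta(x)^\top\kappa_0(x))\,\eta(x) \;=\; \hat{\Pi}(x)\,\kappa_0(x),
\end{equation*}
with $\hat{\Pi}(x)$ as in \eqref{eq:projection matrix}. There is no real obstacle in this proof; the only subtle point worth stating explicitly is the use of $\|\eta(x)\|=1$, since without it one would get a factor $\|\eta(x)\|^2$ in the denominator and the clean projection form would not emerge. Everything else is a short linear-algebraic manipulation, and strict convexity closes the uniqueness question.
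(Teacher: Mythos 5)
Your proof is correct and follows essentially the same route as the paper: write the first-order optimality condition for the strictly convex quadratic, take the inner product with $\eta(x)$ using $\|\eta(x)\|=1$ to solve for the scalar $u^\top\eta(x)$, then back-substitute to obtain the projection form. Your explicit justification of $\|\eta(x)\|=1$ via \eqref{eq:gradientOfdistance} and Assumption~\ref{assumption:uniqeProjection} is a minor but welcome addition that the paper leaves implicit.
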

\begin{proof}
We consider the unconstrained optimization problem \eqref{eq:optimization}, which is strictly convex in $u$ due to the positive definiteness of the quadratic cost. Let \( J(u) \) denote the cost function:
\[
J(u) = \frac{1}{2}\|u - \kappa_0(x)\|^2 + \frac{1}{2}\psi(d(x), s(x))\left(\eta(x)^\top u\right)^2.
\]
This function is differentiable and strictly convex, hence it admits a unique minimizer. To determine the optimal control input, we compute the gradient of \( J(u) \) with respect to \( u \):
\begin{equation}
    \nabla J(u) = u - \kappa_0(x) + \psi(d(x), s(x)) \eta(x)\eta(x)^\top u.
\end{equation}
Setting \( \nabla J(u) = 0 \) yields the first-order optimality condition:
\begin{equation}\label{eq:gradientZero}
    u - \kappa_0(x) + \psi(d(x), s(x)) \eta(x)\eta(x)^\top u = 0.
\end{equation}

To solve for \( u \), we first project both sides of \eqref{eq:gradientZero} along \( \eta(x)^\top \). Using the normalization \( \|\eta(x)\| = 1 \), we obtain
\[
\eta(x)^\top u \left[ 1 + \psi(d(x), s(x)) \right] = \eta(x)^\top \kappa_0(x),
\]
which implies
\begin{equation}\label{eq:projU}
    \eta(x)^\top u = \frac{\eta(x)^\top \kappa_0(x)}{1 + \psi(d(x), s(x))}.
\end{equation}

Substituting \eqref{eq:projU} back into \eqref{eq:gradientZero}, we find
\[
u = \kappa_0(x) - \frac{\psi(d(x), s(x))}{1 + \psi(d(x), s(x))} \eta(x)\eta(x)^\top \kappa_0(x).
\]
The resulting expression corresponds to the closed-form solution stated in the proposition.
\end{proof}
According to Definition~\ref{def:Penalty function}, the penalty scaling function \(\psi(d(x), s(x))\) is nonnegative and increases as the robot approaches the obstacle or as the nominal controller points toward it. As a result, the term \(\frac{\psi(d(x), s(x))}{1 + \psi(d(x), s(x))}\) lies in the interval \([0, 1)\), and the projection matrix \(\hat{\Pi}(x)\) smoothly interpolates between the identity matrix and an orthogonal projection. In particular, when \(\psi(d(x), s(x)) \to 0\), which occurs when the robot is far from the obstacle and the nominal controller is directed away from it, we recover \(\hat{\Pi}(x) \to \mathbf{I}_n\), so the control input satisfies \(u = \kappa_0(x)\). In contrast, when \(\psi(d(x), s(x)) \to +\infty\), the matrix \(\hat{\Pi}(x)\) tends to \(\Pi(x) := \mathbf{I}_n - \eta(x)\eta(x)^\top\), corresponding to the orthogonal projection onto the hyperplane normal to \(\eta(x)\). In this case, the control input is forced to be orthogonal to the obstacle gradient direction, thereby ensuring collision avoidance. This construction enables a smooth and state-dependent transition between nominal tracking and obstacle avoidance, governed entirely by the penalty function \(\psi(d(x), s(x))\).

From a practical standpoint, the computation of potentially large values of $\psi$ can be avoided altogether by directly defining and evaluating the bounded ratio $\frac{\psi(d(x), s(x))}{1 + \psi(d(x), s(x))}$; see the following remark for further clarification.

\begin{rem}\label{remark:PsiChoice}
The choice of the function $\psi(d(x), s(x))$ can be arbitrary as long as it satisfies Definition~\ref{def:Penalty function}. In our work, we use the following penalty scaling function:
\begin{equation}\label{eq:psi}
    \psi(d(x), s(x)) = \frac{\phi_\mu(d(x)) \, \phi_\nu(s(x))}{1 - \phi_\mu(d(x)) \, \phi_\nu(s(x))},
\end{equation}
where
\begin{equation}
    \phi_\tau(z) := \begin{cases}
        0,  & z \in [\tau, +\infty),\\
        \gamma(z),  & z \in (0, \tau),\\
        1, & z \in (-\infty, 0],
    \end{cases}
\end{equation}
and $\phi_\tau(\cdot)$ is a smooth transition function parameterized by $\tau > 0$. The function $\gamma(\cdot) \in [0,1]$ can be any sufficiently differentiable function that ensures a smooth transition between $1$ and $0$ over the interval $(0, \tau)$. For instance, to obtain a continuously differentiable function $\phi_\tau(\cdot)$, one can choose $\gamma(\cdot)$ as a cubic polynomial satisfying the boundary conditions $\gamma(0) = 1$, $\gamma'(0) = 0$, $\gamma(\tau) = 0$, and $\gamma'(\tau) = 0$. This yields: $\gamma(z) = a + bz + cz^2 + dz^3$ where $a=1$, $b=0$, $c=-3/\tau^2$, $d=2/\tau^3$. The function $\phi_\mu(d(x))$ ensures smoothness with respect to distance (i.e., it regularizes the controller near $d(x) = 0$), while $\phi_\nu(s(x))$ ensures smoothness with respect to the directionality encoded in $s(x)$. 

\textit{Importantly, using this formulation, the bounded term in the projection matrix \eqref{eq:projection matrix}, becomes
\[
\frac{\psi(d(x), s(x))}{1 + \psi(d(x), s(x))}=\phi_\mu(d(x)) \, \phi_\nu(s(x)),
\]
thereby avoiding any numerical issues associated with the potentially large values of $\psi$.} Figure~\ref{fig:Psi d s} depicts an example of the functions $\psi(d(x), s(x))$, $\phi_\mu(d(x))$, and $\phi_\nu(s(x))$.
\end{rem}

\begin{figure}[h!]
    \centering

    \begin{subfigure}[b]{0.5\linewidth}
        \centering
        \includegraphics[width=\linewidth]{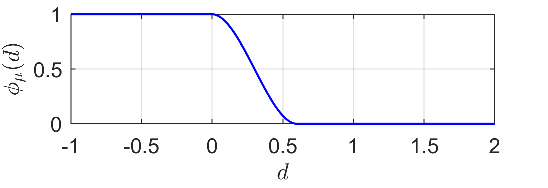}
        \caption{}
        \label{fig:d}
    \end{subfigure}
    \vfill
    \begin{subfigure}[b]{0.5\linewidth}
        \centering
        \includegraphics[width=\linewidth]{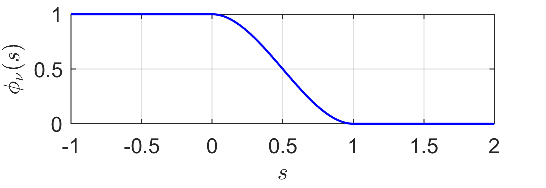}
        \caption{}
        \label{fig:s}
    \end{subfigure}

    \vfill
    \begin{subfigure}[b]{0.5\linewidth}
        \centering
        \includegraphics[width=\linewidth]{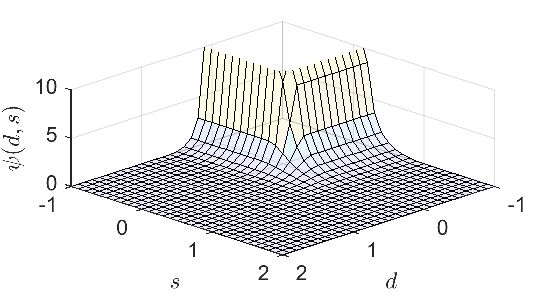}
        \caption{}
        \label{fig:psiofds}
    \end{subfigure}

    \caption{An example of the functions $\phi_\mu(d(x))$, $\phi_\nu(s(x))$ and $\psi(d(x),s(x))$. (\subref{fig:d}) represents $\phi_\mu(d(x))$, where $\mu=0.6$. (\subref{fig:s}) represents $\phi_\nu(s(x))$, where $\nu=1$. (\subref{fig:psiofds}) represents the $\psi(d(x),s(x))$, and shows that it blows up when $d(x)\to 0$ and $s(x)\to 0$ and vanishes when $d(x)\ge\mu$ or $s(x)\ge\nu$.}
    \label{fig:Psi d s}
\end{figure}

We recall our controller $u=\kappa(x)$ proposed in Proposition~\ref{proposition:OP solution} and defined by \eqref{eq:smoothControl}. From here, we can state our first result related to the safety of the robot under the dynamics \eqref{eq:dynamicalSystem}.

\begin{theorem}\label{theorem:invariance}
    Consider the set $\mathcal{X}\subset\mathbb R^n$ that describes the free space and satisfies Assumptions~\ref{assumption:smoothBoundaries}--\ref{assumption:uniqeProjection}. Consider the set $\mathcal{X}_\epsilon\in\mathbb R^n$ that describes the practical free space and is given by \eqref{eq:practicalfreespace}. Consider a nominal controller $\kappa_0(x)$ of class $\mathcal{C}^j$ and a penalty scaling function $\psi(\cdot)$ of class $\mathcal{C}^l$. Consider the dynamical system \eqref{eq:dynamicalSystem} under the control law \eqref{eq:smoothControl}. Then, the controller \eqref{eq:smoothControl} is of class $\mathcal{C}^{\min(j,k-1,l)}$ and its norm is always less than or equal to the norm of $\kappa_0(x)$. Moreover, if $\min(j,k-1,l)\ge 1$ then the closed-loop system admits a unique solution and the set $\mathcal{X}_\epsilon$ is positively invariant.
\end{theorem}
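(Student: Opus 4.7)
The plan is to verify the four claims of the theorem in sequence. First, I would read off the regularity of $\kappa(x)=\hat{\Pi}(x)\kappa_0(x)$ directly from its closed-form expression. Under Assumption~\ref{assumption:smoothBoundaries}, $d_{\complement\mathcal{X}}$ is $\mathcal{C}^k$ on the tubular region, so $\eta=\nabla d_{\complement\mathcal{X}}$ is $\mathcal{C}^{k-1}$ there; combined with $\kappa_0\in\mathcal{C}^j$ and $\psi\in\mathcal{C}^l$, the compositions and products appearing in \eqref{eq:projection matrix}--\eqref{eq:smoothControl} yield $\kappa\in\mathcal{C}^{\min(j,k-1,l)}$. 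For the norm bound, I would decompose $\kappa_0(x)=s(x)\eta(x)+\kappa_0^{\perp}(x)$ with $\eta^{\top}\kappa_0^{\perp}=0$ and set $\alpha:=\psi/(1+\psi)$. Since $\psi\ge 0$ one has $\alpha\in[0,1)$ and $\hat{\Pi}\kappa_0=(1-\alpha)\,s\,\eta+\kappa_0^{\perp}$, so orthogonality gives $\|\kappa\|^2=(1-\alpha)^2 s^2+\|\kappa_0^{\perp}\|^2\le s^2+\|\kappa_0^{\perp}\|^2=\|\kappa_0\|^2$. Finally, when $\min(j,k-1,l)\ge 1$ the closed-loop vector field is $\mathcal{C}^1$ and hence locally Lipschitz on a neighborhood of $\mathcal{X}_\epsilon$, so the Picard--Lindel\"of theorem yields existence and uniqueness of maximal solutions.

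The remaining substantive claim is forward invariance of $\mathcal{X}_\epsilon$, which I would establish via Nagumo's theorem. Because $\eta$ is continuous and nonvanishing in a neighborhood of the smooth boundary $\partial\mathcal{X}_\epsilon=\{d(x)=0\}$, the Bouligand tangent cone to $\mathcal{X}_\epsilon$ at a boundary point $x$ is exactly the inward half-space $\{v\in\mathbb{R}^n:v^{\top}\eta(x)\ge 0\}$. It therefore suffices to verify $\kappa(x)^{\top}\eta(x)\ge 0$ for all $x\in\partial\mathcal{X}_\epsilon$. A direct computation using Proposition~\ref{proposition:OP solution} gives
\begin{equation*}
\kappa(x)^{\top}\eta(x)=(1-\alpha(x))\,s(x)=\frac{s(x)}{1+\psi(d(x),s(x))}.
\end{equation*}
When $s(x)\ge 0$ this is manifestly nonnegative. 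When $s(x)<0$, property~3 of Definition~\ref{def:Penalty function} drives $\psi\to+\infty$ as $d\to 0$ with $s$ non-positive, so the bounded ratio $\psi/(1+\psi)$ attains the value $1$ and $\kappa(x)^{\top}\eta(x)=0$, meaning the vector field is tangent to $\partial\mathcal{X}_\epsilon$. In either case Nagumo's condition is satisfied and $\mathcal{X}_\epsilon$ is forward invariant.

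The main subtlety I would handle with care is the last step: the invariance argument requires $\kappa^{\top}\eta$ to \emph{equal} zero at boundary points with $s<0$, not merely to tend to zero along approaching trajectories. This is most cleanly achieved by phrasing the design in terms of the bounded quantity $\psi/(1+\psi)$ and extending it continuously to $1$ on $\{d=0,\,s\le 0\}$; for the concrete choice of Remark~\ref{remark:PsiChoice} this is automatic, since $\psi/(1+\psi)=\phi_\mu(d)\phi_\nu(s)$ is identically $1$ on that set. With this reformulation the tangent-cone inequality becomes an equality at the dangerous boundary points, and the proof closes without any limiting/compactness argument on trajectories.
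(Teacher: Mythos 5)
Your proof is correct and follows essentially the same structure as the paper's: the same chain of smoothness bookkeeping for $\eta$, $s$, $\psi$, and $\alpha=\psi/(1+\psi)$; an equivalent (orthogonal-decomposition) computation for the norm bound, where the paper instead expands $\|\kappa\|^2=\|\kappa_0\|^2-\alpha(2-\alpha)(\eta^\top\kappa_0)^2$ but these are algebraically the same; and the same tangency observation at the boundary. Where you depart slightly is in the invariance step: the paper checks $\eta(x)^\top\dot x=0$ at boundary points with $s\le 0$ and then asserts invariance informally while citing Khalil's existence--uniqueness theorem, whereas you invoke Nagumo's theorem explicitly together with the tangent-cone characterization $\{v:v^\top\eta(x)\ge 0\}$, which is legitimate because $\eta=\nabla d_{\complement\mathcal{X}}$ is a nonvanishing unit vector on $\partial\mathcal{X}_\epsilon$ under Assumptions~\ref{assumption:smoothBoundaries}--\ref{assumption:uniqeProjection}. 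You also flag a genuine technicality the paper glosses over: Definition~\ref{def:Penalty function} only stipulates $\psi\to+\infty$ as $d\to 0^-$, $s\to 0^-$, so strictly speaking $\alpha$ approaches but need not attain the value $1$ on $\{d=0,\ s\le 0\}$ unless the controller is defined directly through the bounded ratio --- which is exactly what Remark~\ref{remark:PsiChoice} does, since there $\alpha=\phi_\mu(d)\phi_\nu(s)$ is identically $1$ on that set. Your reformulation resolves this cleanly and is a worthwhile sharpening of the paper's argument.
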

\begin{proof}
The smoothness of the controller \eqref{eq:smoothControl} depends directly on the classes of the functions $\psi$, $d$ and $\kappa_0$. According to Assumption~\ref{assumption:smoothBoundaries} the distance function $d(x)$ is of class $\mathcal{C}^k$ and as a result $\eta(x)\in\mathcal{C}^{k-1}$. Given the nominal controller $\kappa_0(x)$ is $\mathcal{C}^j$, then $s(x)=\kappa_0(x)^\top\eta(x)\in\mathcal{C}^{\min\{j,k-1\}}$. This implies that the composite function $\psi(d(x),s(x))$ is of class $\mathcal{C}^{\min\{j,k-1,l\}}$. Considering the function:
\begin{equation*}
    \alpha(x):=\frac{\psi(d(x),s(x))}{1+\psi(d(x),s(x))}.
\end{equation*}
Since the denominator $(1+\psi(d(x),s(x)))\neq 0$, then $\alpha(x)$ is the same class as $\psi(d(x),s(x))$. Thus $\alpha(x)\in\mathcal{C}^{\min\{j,k-1,l\}}$. Next, the outer production $\eta(x)\eta(x)^\top$ is $\mathcal{C}^{k-1}$, then $\alpha(x)\eta(x)\eta(x)^\top\in\mathcal{C}^{\min\{j,k-1,l\}}$. Therefore, $\kappa(x)$ is of class $\mathcal{C}^{\min\{j,k-1,l\}}$. 
To prove that the controller \eqref{eq:smoothControl} is bounded, we first expand its equation:
\begin{equation}
    \kappa(x) = \kappa_0(x)-\alpha(x)\eta(x)(\eta(x)^\top\kappa_0(x)).
\end{equation}
Then we can write $\|\kappa(x)\|^2$ as follows: 
\begin{align*}
\|\kappa(x)\|^2 &= \left\| \kappa_0(x) - \alpha(x) (\eta(x)^\top \kappa_0(x))\eta(x) \right\|^2 \\
&= \|\kappa_0(x)\|^2 - 2\alpha(x) (\eta(x)^\top \kappa_0(x))^2 \\ 
&+ \alpha(x)^2 (\eta(x)^\top \kappa_0(x))^2 \\
&= \|\kappa_0(x)\|^2 - \alpha(x)(2 - \alpha(x))(\eta(x)^\top \kappa_0(x))^2.
\end{align*}
Since \( \alpha(x) \in [0,1) \), we have \( \alpha(x)(2 - \alpha(x)) > 0 \). Thus,
\[
\|\kappa(x)\|^2 \le \|\kappa_0(x)\|^2 \quad \Rightarrow \quad \|\kappa(x)\| \le \|\kappa_0(x)\|.
\]
Hence, the norm $\|\kappa(x)\|$ of our controller is always less or equal than the norm of the nominal controller $\|\kappa_0(x)\|$.


Next, we prove that the closed-loop system admits a unique solution. Given that $\min\{j,k-1,l\}\ge 1$, then $\kappa(x)$ is at least continuously differentiable, hence locally Lipschitz on $\mathcal{X}_\epsilon$. Now, we show that any trajectory starting in $\mathcal{X}_\epsilon$ remains in it for all future time. It suffices to investigate the behavior at the boundary, {\it i.e.,} when $x\in\partial\mathcal{X}_\epsilon$, and when the $\eta(x)^\top\kappa_0(x)\le 0$. In this case, the nominal control is totally projected onto the hyperplane normal to $\eta(x)$:
\begin{align*}
        \dot{x}|_{d(x)=0,\eta(x)^\top\kappa_0\le 0} &= \kappa(x)|_{d(x)=0,\eta(x)^\top\kappa_0\le 0}\\
        &=(\mathbf{I}_n-\eta(x)\eta(x)^\top)\kappa_0(x).
\end{align*}
By multiplying both sides by $\eta(x)^\top$ and using the fact that $\eta(x)^\top\eta(x)=1$, we get
\begin{equation}
        \eta(x)^\top\dot{x}|_{d(x)=0,\eta(x)^\top\kappa_0\le 0} =0.
\end{equation}
This means that, in the worst case, {\it i.e.,} when $x$ is in the boundary $\partial\mathcal{X}_\epsilon$ and when the nominal controller $\kappa_0(x)$ is unsafe, the normal component of the robots velocity is null. Thus, the trajectories will stay inside or at the boundary of the practical free space $\mathcal{X}_{\epsilon}$. With this, we met the conditions of \cite[theorem 3.3]{khalil2002nonlinear} and as a result the closed-loop system admits a unique solution. Eventually, we showed that the set $\mathcal{X}_\epsilon$ is positively invariant under the proposed control law \eqref{eq:smoothControl} given the previous discussion.
\end{proof}

Theorem \ref{theorem:invariance} addresses the aspect of safety and suggests that safety is guaranteed for any nominal controller $\kappa_0(x)$ and for any shape of obstacles (convex or non-convex). Next, we study the motion-to-goal feature, {\it i.e.,} convergence of the robot's trajectories to the desired position $x_d$ under the proposed smooth controller $\kappa(x)$. This convergence is influenced by the choice of the nominal controller $\kappa_0(x)$. We specify a nominal controller based on a gradient-decent strategy, which we define as follows:
\begin{equation}\label{eq:nominalControl}
    \kappa_0(x) = -\nabla V(x),
\end{equation}
where $V:\mathbb{R}^n\to\mathbb{R}$ is a scalar-valued potential function and $\nabla V(x)$ is its gradient with respect to $x$. We impose the following assumption on $V(x)$:
\begin{assumption}\label{assumption:potential}
    Let $V(x)$ be a $\mathcal{C}^{j+1}$-class scalar function satisfying the following properties:
    \begin{enumerate}
        \item $V(x)$ is positive definite, i.e., $V(x_d)=0$ and $V(x)>0$ for all $x\neq x_d$.
        \item $V(x)$ is radially unbounded, i.e., $V(x)\to\infty$ as $\|x\|\to\infty$.
    \end{enumerate}
\end{assumption}
Under Assumption \ref{assumption:potential}, the nominal controller $\kappa_0(x)$ drives the robot to $x_d$ in the absence of the obstacles. In what follows, we establish our results on the stability properties of the closed-loop system \eqref{eq:closedLoopsystem} under the controller \eqref{eq:smoothControl}. 
\begin{figure}
    \centering
    \includegraphics[width=0.6\linewidth]{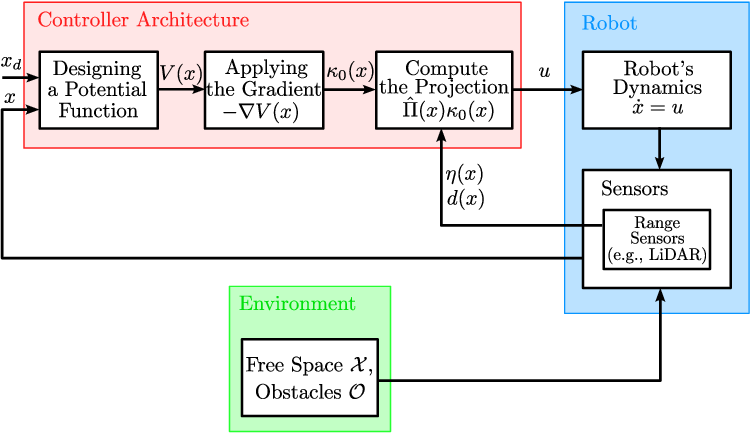}
    \caption{Block diagram illustrating the feedback avoidance controller proposed Theorem \ref{theorem:convergence}. The proposed simple control strategy in \eqref{eq:smoothControl} smoothly projects the nominal gradient descent controller \eqref{eq:nominalControl} onto the tangent space to the obstacle's boundary as the robot moves closer to the obstacle.  
    }
    \label{fig:diagram}
\end{figure}

\begin{theorem}\label{theorem:convergence}
    Consider the set $\mathcal{X}\subset\mathbb R^n$ that describes the free space and satisfies Assumptions~\ref{assumption:smoothBoundaries}--\ref{assumption:uniqeProjection}. Consider the set $\mathcal{X}_\epsilon\in\mathbb R^n$ that describes the practical free space and is given by \eqref{eq:practicalfreespace}. Consider the dynamical system \eqref{eq:dynamicalSystem} under the $\mathcal{C}^{\min(j,k-1,l)}$-class control law \eqref{eq:smoothControl}, with $\kappa_0(x)$ as in \eqref{eq:nominalControl}. Then, the potential $V(x)$ is non-increasing and trajectories converge to the set $\mathcal{E}\cup\{x_d\}$, where
    \begin{equation}\label{eq:E}
        \mathcal{E}:=\{x\in\partial\mathcal{X}_\epsilon:\nabla V(x)=\lambda\eta(x), \lambda\in\mathbb{R}_{>0}\}
        \end{equation}
        is a set of measure zero.
\end{theorem}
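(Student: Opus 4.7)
The strategy is to treat $V$ as a Lyapunov function and invoke LaSalle's invariance principle on a compact sublevel set of $\mathcal{X}_\epsilon$, then identify the resulting limit set with $\mathcal{E}\cup\{x_d\}$ and show the ``bad'' part $\mathcal{E}$ is thin.

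First, I would compute the Lie derivative of $V$. Introducing the scalar $\alpha(x) := \psi(d(x),s(x))/(1+\psi(d(x),s(x)))\in[0,1]$, Proposition~\ref{proposition:OP solution} gives $\hat{\Pi}(x) = I_n - \alpha(x)\,\eta(x)\eta(x)^\top$ and $\dot x = -\hat{\Pi}(x)\nabla V(x)$. A direct expansion yields
$$\dot V(x) = -\|\nabla V(x)\|^2 + \alpha(x)\bigl(\eta(x)^\top\nabla V(x)\bigr)^2.$$
Since $\alpha(x)\le 1$ and, by Cauchy--Schwarz with $\|\eta\|=1$, $(\eta^\top\nabla V)^2\le\|\nabla V\|^2$, we conclude $\dot V\le 0$, so $V$ is non-increasing along trajectories. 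Combined with positive-definiteness and radial unboundedness of $V$ (Assumption~\ref{assumption:potential}) and the forward invariance of $\mathcal{X}_\epsilon$ from Theorem~\ref{theorem:invariance}, the set $\Omega := \{x\in\mathcal{X}_\epsilon : V(x)\le V(x(0))\}$ is compact and positively invariant, so LaSalle's invariance principle applies.

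Second, I would characterize the zero set of $\dot V$ inside $\Omega$. Equality $\dot V = 0$ forces $\|\nabla V\|^2 = \alpha(\eta^\top\nabla V)^2$; given $\alpha\le 1$ and $(\eta^\top\nabla V)^2\le\|\nabla V\|^2$, this holds only if either $\nabla V(x) = 0$---which, under the standard navigation-potential assumption that $x_d$ is the unique critical point of $V$, yields $x = x_d$---or simultaneously $\alpha(x)=1$ and $\nabla V(x)$ is parallel to $\eta(x)$. By Definition~\ref{def:Penalty function}, $\alpha(x) = 1$ corresponds to $\psi$ blowing up, which on the admissible set $\{d(x)\ge 0\}$ happens precisely at $d(x) = 0$ with $s(x)\le 0$. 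Therefore $x\in\partial\mathcal{X}_\epsilon$ and $\eta(x)^\top\nabla V(x) = -s(x)\ge 0$, so writing $\nabla V(x) = \lambda\eta(x)$ gives $\lambda\ge 0$; excluding $\lambda = 0$ (already covered by the $x = x_d$ case, which lies in $\mathrm{int}(\mathcal{X}_\epsilon)$) yields exactly $x\in\mathcal{E}$. Hence $\{\dot V = 0\}\cap\Omega = (\mathcal{E}\cap\Omega)\cup\{x_d\}$.

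Third, I would verify invariance and the measure-zero property. On $\mathcal{E}$, $\alpha = 1$ and $\nabla V = \lambda\eta$ give $\dot x = -(I_n - \eta\eta^\top)\lambda\eta = 0$, so every point of $\mathcal{E}$ is an equilibrium; likewise $\dot x = 0$ at $x_d$. Thus $\mathcal{E}\cup\{x_d\}$ is itself the largest invariant subset of $\{\dot V = 0\}$, and LaSalle yields the claimed convergence. For measure zero, Assumption~\ref{assumption:smoothBoundaries} implies that $\nabla d_{\complement\mathcal{X}}$ is a non-vanishing $\mathcal{C}^{k-1}$ unit field near $\partial\mathcal{X}_\epsilon$, so by the implicit function theorem $\partial\mathcal{X}_\epsilon$ is a $\mathcal{C}^{k-1}$ hypersurface of dimension $n-1$; since $\mathcal{E}\subset\partial\mathcal{X}_\epsilon$, it carries $n$-dimensional Lebesgue measure zero.

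The main delicate step is handling the singular locus $\alpha(x) = 1$: because $\psi$ is only required to blow up in the simultaneous limit $d\to 0^-,\ s\to 0^-$, one must argue carefully that within the admissible region $\mathcal{X}_\epsilon$ the projector $\hat{\Pi}$ degenerates to the full orthogonal projector $I_n - \eta\eta^\top$ exactly on $\{d = 0\}\cap\{s\le 0\}$ and nowhere else. This is what confines the undesired limit set to $\mathcal{E}$ and makes the sign constraint $\lambda>0$ come out of the analysis rather than being imposed by hand.
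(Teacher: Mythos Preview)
Your proposal is correct and follows essentially the same route as the paper: use $V$ as a Lyapunov function, show $\dot V=-\nabla V^\top\hat\Pi\nabla V\le 0$ via the bound $\alpha\le 1$ together with $(\eta^\top\nabla V)^2\le\|\nabla V\|^2$, characterize $\{\dot V=0\}$ as $\{x_d\}\cup\mathcal E$, check that each such point is an equilibrium so this set is the largest invariant subset, and conclude by LaSalle and the fact that $\mathcal E\subset\partial\mathcal X_\epsilon$. The only cosmetic differences are that the paper phrases the inequality via $\cos^2\theta(x)$ rather than Cauchy--Schwarz, and that you make explicit both the compactness of the sublevel set and the implicit-function-theorem justification for $\partial\mathcal X_\epsilon$ being an $(n-1)$-dimensional hypersurface---details the paper leaves implicit.
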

\begin{proof}
We start by letting $V(x)$ be a positive definite function that satisfies Assumption \ref{assumption:potential}. Its time derivative in view of the dynamical system \eqref{eq:dynamicalSystem} and under the controller \eqref{eq:smoothControl} is given by
\begin{align*}
    \dot{V}(x)&=\nabla V(x)^\top\dot{x}\nonumber\\
    &=-\nabla V(x)^\top \hat{\Pi}(x)\nabla V(x).
\end{align*}
The term  $\nabla V(x)^\top \hat{\Pi}(x)\nabla V(x)$ is positive semi-definite as shown in the following:
\begin{align*}
    &\nabla V(x)^\top \hat{\Pi}(x)\nabla V(x)\\
    &=\nabla V(x)^\top\Big[\mathbf{I}_n-\frac{\psi(d(x),s(x))}{1+\psi(d(x),s(x))}\eta(x)\eta(x)^\top\Big]\nabla V(x)\\
    &=\|\nabla V(x)\|^2-\frac{\psi(d(x),s(x))}{1+\psi(d(x),s(x))}\|\nabla V(x)\|^2\cos^2{\theta}(x)\\
    &=\|\nabla V(x)\|^2\Big[1-\frac{\psi(d(x),s(x))}{1+\psi(d(x),s(x))}\cos^2{\theta}(x)\Big]\ge 0,
\end{align*}
since $\frac{\psi(d(x),s(x))}{1+\psi(d(x),s(x))}\in[0,1]$, where $\theta(x)$ is the angle  between the two vectors $\nabla V(x)$ and $\eta(x)$. Therefore, we have
\begin{equation}
    \dot{V}(x)\le 0,\quad\forall x\in\mathcal{X}_{\epsilon}.
\end{equation}
This implies two things. First, $\frac{\psi(d(x),s(x))}{1+\psi(d(x),s(x))}=1$ which can be attained only when $\psi(d(x),s(x))\to+\infty$. In other terms, according to the Definition~\ref{def:Penalty function} of the Penalty Scaling Function, this condition is satisfied for any point $x$ that lies on the boundary $\partial\mathcal{X}_\epsilon$ and satisfies $\kappa_0(x)^\top\eta(x)\le 0$. Second, $\cos^2{\theta}(x)=1$ which is met for all points $x$ such that $\nabla V(x)=\lambda\eta(x)$, where $\lambda>0$. The sign of the parameter $\lambda$ is deduced from the fact that $\kappa_0(x)^\top\eta(x)\le 0$. Therefore, the set $\{x_d\}\cup\mathcal{E}$, where $\mathcal{E}$ is defined in \eqref{eq:E}, represent all the points for which $\dot{V}(x)=0$. Moreover, this set is the largest invariant set contained in $\{x\in\mathcal{X}_\epsilon: \dot{V}(x)=0\}$, since any point $x\in\{x_d\}\cup\mathcal{E}$, when substituted into the closed-loop system dynamics \eqref{eq:closedLoopsystem} with the control law \eqref{eq:smoothControl}, satisfies $\dot{x}=0$. Finally, according to LaSalle's Theorem, all the solutions starting in $\mathcal{X}_\epsilon$ converge asymptotically to the set of points $\{x_d\}\cup\mathcal{E}$.
Since $\partial\mathcal{X}_\epsilon$ is a set of measure zero, and $\mathcal{E}$ is a subset of $\partial\mathcal{X}_\epsilon$, it follows that $\mathcal{E}$ is also of measure zero.
\end{proof}

\begin{figure}[h!]
    \centering
    \includegraphics[width=0.5\linewidth]{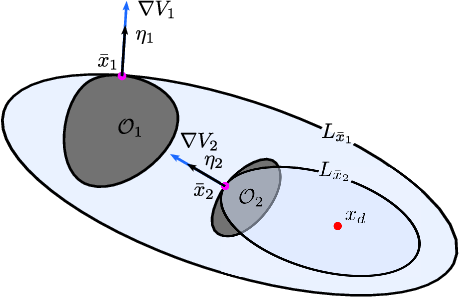}
    \caption{An illustration of the undesired equilibria in case of 2D obstacles. The (gray) regions $\mathcal{O}_1$ and $\mathcal{O}_2$ represent distinct obstacles, each associated with an undesired equilibrium point $\bar{x}_1$ and $\bar{x}_2$, respectively. The sets $L_{\bar{x}_1}$ and $L_{\bar{x}_2}$ are the level sets passing through these equilibria. At each $\bar{x}_i$, the gradient $\nabla V_1$ normal to the level set, collinear and point in the same direction with $\eta_1$, the normal to the boundary of the obstacle.
   }
    \label{fig:undesiredPoints}
\end{figure}

Theorem \ref{theorem:convergence} address the problem of motion-to-goal of the robot under the proposed controller \eqref{eq:smoothControl}. The set $\mathcal{E}$ represents the set of undesired equilibria, which are points on the boundary of the practical free space $\partial\mathcal{X}_\epsilon$ where the gradient $\nabla V(x)$ becomes collinear with the normal $\eta(x)$ and points in the same direction. To illustrate this, we start by defining the level-set 
\begin{equation}\label{eq:levelset}
    L_{\bar{x}}:=\{x\in\mathbb{R}^n:V(x)=V(\bar{x})\},
\end{equation}
where $\bar{x}\in\mathcal{E}$. The outward normal to $L_{\bar{x}}$ at $x=\bar{x}$ is $\nabla V(\bar{x})/\|\nabla V(\bar{x})\|$. This implies that the boundary of the dilated obstacle set by the parameter $R+\epsilon$ share the same point $\bar{x}$ with the set $L_{\bar{x}}$ as well as the same unit normal vector (See Fig. \ref{fig:undesiredPoints}). In other words, if $\mathcal{O}_\epsilon$ is the dilated obstacle, then the boundary $\partial\mathcal{O}_\epsilon$ have the same tangent space with $L_{\bar{x}}$ at $x=\bar{x}$, i.e., $T_{\bar{x}}\partial\mathcal{O}_\epsilon=T_{\bar{x}}L_{\bar{x}}$, and any tangent vector $v(\bar{x})$ at $\bar{x}$ lies in both tangent spaces. This allows us to study the local relative curvature between the two hypersurfaces $\partial\mathcal{O}_\epsilon$ and $L_{\bar{x}}$ near $\bar{x}$. For this purpose, we define the normal curvature as follows:
\begin{definition}[Normal Curvature ]\cite[Definitions 2.2-2.3, Chap 5]{ShapeOperator}\label{def:normal curvature}
    Consider the $(n-1)$-dimensional hypersurface $M\subset\mathbb R^n$ and its tangent space $T_pM$ at $p\in M$. Let $v_p\in T_pM$ be a unit vector. The normal curvature\footnote{The sign of the normal curvature depends on the choice of the hypersurface normal
    . In this work, we adopt the convention where the normal to the hypersurface defining the boundary of a set is oriented to the interior of the said set, \textit{i.e.,} an inward normal. } $C:M\to\mathbb R$ in the direction $v_p$ is defined as:
    \begin{equation}
        C(v_p):=-v_p^\top \nabla N_p v_p,
    \end{equation}
    where $N_p$ is the inward unit normal to $M$ at $p$.
\end{definition}
The notion of normal curvature, as defined above, can be directly related to the second-order properties of the scalar functions that generate various hypersurfaces, such as the dilated obstacle boundaries $\partial\mathcal{O}_\epsilon$ and the level set $L_{\bar{x}}$ introduced in \eqref{eq:levelset}. These geometric objects will play a central role in the subsequent analysis. To be consistent with the adopted normal orientation convention, we take the inward normal to $\partial\mathcal{O}_\epsilon$ as $-\eta(x)$ and the inward normal to $L_{\bar{x}}$ as $-\nabla V(x)$. This yields the following expressions for the normal curvature.
The normal curvature of the boundary $\partial\mathcal{O}_\epsilon$ at $\bar{x}$ in the direction $v(\bar{x}) \in T_{\bar{x}}\partial\mathcal{O}_\epsilon$ is given by
\begin{equation}\label{eq:curvature-obstacle}
    C_{\partial\mathcal{O}_\epsilon}(v(\bar{x})) = v(\bar{x})^\top\mathbf{H}_d(\bar{x})v(\bar{x}),
\end{equation}
where $\mathbf{H}_d(\bar{x})$ denotes the Hessian of the distance function $d(x)$ evaluated at $\bar{x}$.  
Similarly, the normal curvature of the level set $L_{\bar{x}}$ at the same point $\bar{x}$ in the direction $v(\bar{x}) \in T_{\bar{x}}L_{\bar{x}}$ is given by
\begin{equation}\label{eq:curvature-levelset}
    C_{L_{\bar{x}}}(v(\bar{x})) = \frac{v(\bar{x})^\top\mathbf{H}_V(\bar{x})v(\bar{x})}{\|\nabla V(\bar{x})\|},
\end{equation}
where $\mathbf{H}_V(\bar{x})$ is the Hessian of the potential function $V(x)$ evaluated at $\bar{x}$.

In the remainder of this section, we will analyze the nature of undesired equilibria by comparing these two curvature values. Notably, the normal curvature also plays a key role in determining whether undesired equilibria are isolated. The next proposition provides a condition that ensures isolated equilibria.
.
\begin{proposition}\label{proposition:isolatedEq}
    Consider the set $\mathcal{X}\subset\mathbb R^n$ that describes the free space and satisfies Assumptions~\ref{assumption:smoothBoundaries}--\ref{assumption:uniqeProjection}. Consider the set $\mathcal{X}_\epsilon\in\mathbb R^n$ that describes the practical free space and is given by \eqref{eq:practicalfreespace}. Consider the dynamical system \eqref{eq:dynamicalSystem} under the $\mathcal{C}^{\min(j,k-1,l)}$-control law \eqref{eq:smoothControl}, with $\kappa_0(x)$ as in \eqref{eq:nominalControl}. Then, the undesired equilibria  in $\mathcal{E}$ are isolated if, for any $\bar{x}\in\mathcal{E}$, all directions $v(\bar{x})\in T_{\bar{x}}\partial\mathcal{O}$ satisfies the condition
    \begin{equation}\label{eq:CurvatureConditionIsolated}
             C_{\partial\mathcal{O}_\epsilon}(v(\bar{x}))\neq C_{L_{\bar{x}}}(v(\bar{x})),
    \end{equation}
    where $C_{\partial\mathcal{O}_\epsilon}(v(\bar{x}))$ and $C_{L_{\bar{x}}}(v(\bar{x}))$ are the normal curvature of $\partial\mathcal{O}_\epsilon$ and $L_{\bar{x}}$, respectively, evaluated at $x=\bar{x}$ in the direction $v(\bar{x})$.
\end{proposition}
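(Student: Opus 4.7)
The plan is to recast membership in $\mathcal{E}$ as the zero-set of a tangent vector field on $\partial\mathcal{X}_\epsilon$ and then invoke the implicit function theorem. Specifically, I would observe that for any $\bar{x}'$ sufficiently close to a fixed $\bar{x}\in\mathcal{E}$, the condition $\bar{x}'\in\mathcal{E}$ reduces to $\bar{x}'\in\partial\mathcal{X}_\epsilon$ together with the vanishing of the tangential gradient $F(x):=P(x)\nabla V(x)$, where $P(x):=\mathbf{I}_n-\eta(x)\eta(x)^\top$; the sign requirement $\lambda>0$ is open and therefore automatically inherited. Since $\eta(x)^\top F(x)\equiv 0$, $F$ defines a smooth tangent vector field on the $(n-1)$-dimensional manifold $\partial\mathcal{X}_\epsilon$, and $\bar{x}$ is isolated in $\mathcal{E}$ provided the differential $DF(\bar{x}):T_{\bar{x}}\partial\mathcal{X}_\epsilon\to T_{\bar{x}}\partial\mathcal{X}_\epsilon$ is a linear isomorphism.

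The core computation is to evaluate $DF(\bar{x})v$ for a tangent direction $v\in T_{\bar{x}}\partial\mathcal{X}_\epsilon$. Differentiating the product structure yields $DF(\bar{x})v = \bigl(DP(\bar{x})v\bigr)\nabla V(\bar{x}) + P(\bar{x})\mathbf{H}_V(\bar{x})v$, with $DP(\bar{x})v = -(\mathbf{H}_d(\bar{x})v)\eta(\bar{x})^\top - \eta(\bar{x})(\mathbf{H}_d(\bar{x})v)^\top$ since $\nabla\eta = \mathbf{H}_d$. Substituting $\nabla V(\bar{x}) = \lambda\,\eta(\bar{x})$ and using the identities $\|\eta\|=1$ and $\eta^\top\mathbf{H}_d = 0$ (the latter obtained by differentiating $\|\eta\|^2=1$), the contribution from $DP$ collapses to $-\lambda\mathbf{H}_d(\bar{x})v$. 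Because $\eta^\top\mathbf{H}_d v = 0$, the vector $\mathbf{H}_d(\bar{x})v$ already lies in the tangent space, so one arrives at the self-adjoint operator
\[
DF(\bar{x})\,v \;=\; P(\bar{x})\bigl[\mathbf{H}_V(\bar{x}) - \lambda\,\mathbf{H}_d(\bar{x})\bigr]v, \qquad v\in T_{\bar{x}}\partial\mathcal{X}_\epsilon.
\]

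Next, I would rely on \eqref{eq:curvature-obstacle}--\eqref{eq:curvature-levelset} together with $\lambda=\|\nabla V(\bar{x})\|$ (which follows from $\nabla V=\lambda\eta$ and $\|\eta\|=1$) to rewrite the quadratic form of $DF(\bar{x})$ on unit tangent vectors as
\[
v^\top DF(\bar{x})v \;=\; \|\nabla V(\bar{x})\|\bigl[\,C_{L_{\bar{x}}}(v) - C_{\partial\mathcal{O}_\epsilon}(v)\,\bigr].
\]
The hypothesis \eqref{eq:CurvatureConditionIsolated} then forces this quadratic form to be nonzero for every nonzero $v\in T_{\bar{x}}\partial\mathcal{X}_\epsilon$; being self-adjoint, $DF(\bar{x})$ must therefore have no zero eigenvalue and hence be bijective. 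Applying the implicit function theorem on $\partial\mathcal{X}_\epsilon$ then yields a neighborhood of $\bar{x}$ containing no other zeros of $F$, which is what is needed.

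The step I expect to be most delicate is the variational computation of $DP(\bar{x})v$ and the subsequent simplification: one must simultaneously exploit $\nabla V=\lambda\eta$ to kill the normal component of the projector's variation and $\eta^\top \mathbf{H}_d = 0$ to keep the resulting expression tangent to $\partial\mathcal{X}_\epsilon$, so that $DF(\bar{x})$ genuinely defines a self-adjoint endomorphism of the tangent space. The remaining translation from an algebraic definiteness condition on a symmetric form to the geometric curvature comparison is straightforward once the factor $\|\nabla V(\bar{x})\|^{-1}$ embedded in the definition of $C_{L_{\bar{x}}}$ is carefully accounted for.
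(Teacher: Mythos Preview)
Your proposal is correct and follows essentially the same route as the paper: both reduce the question to the linearization of the tangent vector field $P(x)\nabla V(x)$ on $\partial\mathcal{X}_\epsilon$ at $\bar{x}$, arrive at the same self-adjoint operator $P(\bar{x})\bigl[\mathbf{H}_V(\bar{x})-\lambda\mathbf{H}_d(\bar{x})\bigr]$ on $T_{\bar{x}}\partial\mathcal{X}_\epsilon$ (up to sign), and use the curvature hypothesis to rule out zero eigenvalues. The only cosmetic difference is that the paper phrases this as linearizing the restricted flow $\dot{x}=-P(x)\nabla V(x)$ and invokes Hartman--Grobman, whereas you invoke the inverse/implicit function theorem directly on the zero-set of $F$; your invocation is slightly more economical since topological conjugacy is stronger than what is actually needed for isolation.
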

\begin{proof}
We begin by considering a neighborhood of the undesired equilibrium point $\bar{x} \in \mathcal{E}$, specifically the ball $\mathcal{B}(\bar{x}, r)$. Let $\mathcal{P} = \{x : \nabla V(x)^\top \eta(x) \ge 0\}$ denote the region where the avoidance controller is active. Within the intersection $\mathcal{B}(\bar{x}, r) \cap \mathcal{P} \cap \partial \mathcal{X}_\epsilon$, the system evolves according to the projected dynamics
\begin{equation}\label{eq:localDynamics}
    \dot{x} = -(\mathbf{I}_n - \eta(x)\eta(x)^\top)\nabla V(x),
\end{equation}
which ensures that the motion remains tangent to the boundary, as $\eta(x)^\top \dot{x} = 0$. To analyze the equilibria, we restrict the dynamics to the tangent space at $\bar{x}$,
$
    T_{\bar{x}} := \{v \in \mathbb{R}^n : \eta(\bar{x})^\top v = 0\},
$
and consider perturbations of the form
$
    x = \bar{x} + \sigma \delta x,
$
where $\delta x \in T_{\bar{x}}$ is a unit vector and $\sigma > 0$ is arbitrarily small. Linearizing \eqref{eq:localDynamics} at $\bar{x}$ gives
\begin{equation}\label{eq:LinearizedReducedLocalDynamics}
    \dot{\delta x} = \sigma J(\bar{x}) \delta x,
\end{equation}
where $J(\bar{x})$ denotes the Jacobian of the vector field in \eqref{eq:localDynamics}. Since the motion is restricted to $T_{\bar{x}}$, the effective dynamics evolve in an $(n - 1)$-dimensional subspace. To ensure that $\bar{x}$ is an isolated equilibrium of \eqref{eq:localDynamics}, it suffices to show that the Jacobian restricted to $T_{\bar{x}}$ admits no purely imaginary eigenvalues \footnote{Note that $J(\bar{x})$ admits a zero eigenvalue along $\eta(\bar{x})$ but this direction is orthogonal to the constraint manifold and does not influence the reduced dynamics.}. By the Hartman-Grobman Theorem \cite{Perko2001}, this guarantees that $\bar{x}$ remains the unique equilibrium in a sufficiently small neighborhood.

The Jacobian $J(x)$ is computed as
\begin{multline}\label{eq:jacobian}
    J(x) = -\mathbf{H}_V(x) + (\eta(x)^\top \nabla V(x)) \mathbf{H}_d(x)\\ 
    + \eta(x)\left[\nabla V(x)^\top \mathbf{H}_d(x) + \eta(x)^\top \mathbf{H}_V(x)\right],
\end{multline}
where $\mathbf{H}_V(x)$ and $\mathbf{H}_d(x)$ are the Hessians of $V(x)$ and $d(x)$, respectively.
At an undesired equilibrium $\bar{x} \in \mathcal{E}$, the potential gradient satisfies $\nabla V(\bar{x}) = \lambda \eta(\bar{x})$ for some scalar $\lambda > 0$. Substituting into \eqref{eq:jacobian}, we obtain
\begin{multline}\label{eq:jacobianUndesiredEq}
    J(\bar{x}) = \lambda \mathbf{H}_d(\bar{x}) - \mathbf{H}_V(\bar{x}) + \\\eta(\bar{x})\eta(\bar{x})^\top \left(\lambda \mathbf{H}_d(\bar{x}) + \mathbf{H}_V(\bar{x})\right).
\end{multline}

Projecting the linearized dynamics \eqref{eq:LinearizedReducedLocalDynamics} onto the tangent direction $\delta x \in T_{\bar{x}}$, we obtain in view of \eqref{eq:curvature-obstacle}-\eqref{eq:curvature-levelset}
\begin{align}
    \delta x^\top \dot{\delta x} &= \delta x^\top J(\bar{x}) \delta x \\
    &= \delta x^\top (\lambda \mathbf{H}_d(\bar{x}) - \mathbf{H}_V(\bar{x})) \delta x \\
    &= \lambda C_{\partial \mathcal{O}_\epsilon}(\delta x) - \lambda C_{L_{\bar{x}}}(\delta x).
\end{align}
Since the matrix $\lambda \mathbf{H}_d(\bar{x}) - \mathbf{H}_V(\bar{x})$ is symmetric, this expression defines a real-valued quadratic form on $T_{\bar{x}}$. If 
\[
C_{\partial \mathcal{O}_\epsilon}(\delta x) \neq C_{L_{\bar{x}}}(\delta x), \quad \forall\, \delta x \in T_{\bar{x}},\; \|\delta x\| = 1,
\]
then the quadratic form is nonzero in all directions of the tangent space, implying that $J(\bar{x})$ admits no purely imaginary eigenvalues on $T_{\bar{x}}$. Therefore, the equilibrium $\bar{x}$ is isolated.
\end{proof}

The previous result established that undesired equilibria are isolated whenever the curvatures of the obstacle boundary and the potential level set differ in all tangent directions. We next examine the stability properties of both the desired and undesired equilibria. In contrast to the previous result on isolation, the condition used here involves a curvature inequality that need not hold in all tangent directions. This distinction becomes particularly relevant in higher-dimensional settings and will be discussed in detail after the theorem.

\begin{theorem}\label{theorem:stability}
    Consider the set $\mathcal{X}\subset\mathbb R^n$ that describes the free space and satisfies Assumptions~\ref{assumption:smoothBoundaries}--\ref{assumption:uniqeProjection}. Consider the set $\mathcal{X}_\epsilon\in\mathbb R^n$ that describes the practical free space and is given by \eqref{eq:practicalfreespace}. Consider the dynamical system \eqref{eq:dynamicalSystem} under the $\mathcal{C}^{\min(j,k-1,l)}$-control law \eqref{eq:smoothControl}, with $\kappa_0(x)$ as in \eqref{eq:nominalControl}. If there exist a direction $v(\bar{x})\in T_{\bar{x}}\partial\mathcal{O}$ such that, for all $\bar{x}\in\mathcal{E}$, we have
    \begin{equation}\label{eq:CurvatureCondition}
        C_{\partial\mathcal{O}_\epsilon}(v(\bar{x})) > C_{L_{\bar{x}}}(v(\bar{x})),
    \end{equation}
    then,
    \begin{enumerate}
    \item all the undesired equilibria $\bar x\in\mathcal{E}$ are unstable, and
    \item the desired equilibrium $x_d$ almost globally asymptotically stable.  
    \end{enumerate}
\end{theorem}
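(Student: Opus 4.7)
The plan is to treat the two claims in sequence, reusing the Jacobian computation assembled in the proof of Proposition~\ref{proposition:isolatedEq}. For Item~1, fix an arbitrary $\bar x\in\mathcal{E}$ and recall the identity $\delta x^{\top}J(\bar x)\delta x=\lambda\bigl[C_{\partial\mathcal{O}_\epsilon}(\delta x)-C_{L_{\bar x}}(\delta x)\bigr]$ established for every $\delta x\in T_{\bar x}$, with $\lambda>0$. Because $\eta(\bar x)^{\top}\delta x=0$, the rank-one correction appearing in \eqref{eq:jacobianUndesiredEq} contributes nothing to the quadratic form, so the induced operator on $T_{\bar x}$ coincides with the restriction of the symmetric matrix $\lambda\mathbf{H}_d(\bar x)-\mathbf{H}_V(\bar x)$. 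The hypothesis \eqref{eq:CurvatureCondition} supplies at least one tangent direction in which this form is strictly positive; by the Courant--Fischer characterization, the restricted Jacobian therefore admits a strictly positive eigenvalue. Since the boundary $\partial\mathcal{X}_\epsilon$ is locally forward invariant under the projected closed-loop flow, this positive eigenvalue yields a genuine unstable direction at $\bar x$ and instability of $\bar x$ follows from the unstable manifold theorem.

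For Item~2 I would combine three ingredients. First, local asymptotic stability of $x_d$: since $x_d\in\operatorname{int}(\mathcal{X}_\epsilon)$ while $\mathcal{E}\subset\partial\mathcal{X}_\epsilon$, there is a neighborhood of $x_d$ disjoint from $\mathcal{E}$ on which the dissipation estimate established in the proof of Theorem~\ref{theorem:convergence} forces $\dot V(x)<0$ for all $x\ne x_d$, so $V$ acts as a local strict Lyapunov function. Second, Theorem~\ref{theorem:convergence} already guarantees that every solution starting in $\mathcal{X}_\epsilon$ converges to $\{x_d\}\cup\mathcal{E}$. Third, by Item~1, each point of $\mathcal{E}$ has a linearization with an eigenvalue of strictly positive real part; the center-stable manifold theorem then confines the local basin of attraction of each such point to a submanifold of codimension at least one, hence of Lebesgue measure zero. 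A $\sigma$-compactness covering argument upgrades this to a global statement: the stable set $W^{s}(\mathcal{E})$ has measure zero in $\mathcal{X}_\epsilon$. Combining the three ingredients yields AGAS of $x_d$.

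The main obstacle I anticipate is the last step. Under the relaxed hypothesis of Theorem~\ref{theorem:stability}, which requires only a single bad direction rather than all tangent directions as in Proposition~\ref{proposition:isolatedEq}, the undesired equilibria need not be isolated, so $W^{s}(\mathcal{E})$ cannot simply be decomposed as a countable union of stable manifolds of isolated points. A careful argument must instead cover compact pieces of $\mathcal{E}$ by a locally finite family of neighborhoods and apply a parametric version of the center-stable manifold theorem, exploiting the $\mathcal{C}^{\min(j,k-1,l)}$ regularity of the closed-loop vector field established in Theorem~\ref{theorem:invariance} to obtain uniform control of the codimension. The remaining steps---Lyapunov dissipation, the LaSalle conclusion, and the linearized analysis at $\bar x$---are straightforward extensions of the material already developed in Theorems~\ref{theorem:invariance}--\ref{theorem:convergence} and Proposition~\ref{proposition:isolatedEq}.
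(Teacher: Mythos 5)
Your argument is correct and follows essentially the same route as the paper for Item~2: the Jacobian restricted to $T_{\bar x}\partial\mathcal{O}_\epsilon$ inherits a strictly positive eigenvalue from the curvature gap, so the local stable set of each $\bar x$ has positive codimension and Lebesgue measure zero, and combining this with the convergence to $\{x_d\}\cup\mathcal{E}$ from Theorem~\ref{theorem:convergence} yields AGAS. For Item~1 the paper takes a more hands-on route: it differentiates the Chetaev-type function $W=\tfrac12\|x-\bar x\|^2$ along the projected boundary dynamics, Taylor-expands $\nabla V$ and $\eta$ about $\bar x$, and arrives at $\dot W=\sigma^2\lambda\bigl(C_{\partial\mathcal{O}_\epsilon}(\delta x)-C_{L_{\bar x}}(\delta x)\bigr)$, whereas you jump directly to the tangential quadratic form $\delta x^\top J(\bar x)\delta x$ from Proposition~\ref{proposition:isolatedEq} and invoke the unstable manifold theorem. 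These are two readings of the same symmetric quadratic form $\lambda\mathbf{H}_d(\bar x)-\mathbf{H}_V(\bar x)$, and both are sound; your phrasing is the cleaner of the two, and your observation that the rank-one $\eta(\bar x)(\cdot)^\top$ term is annihilated on $T_{\bar x}$ is more precise than the paper's blanket claim that $J$ is symmetric.

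Your concern about the final measure-zero aggregation is legitimate and in fact points at a small gap in the paper's own presentation. Under the relaxed hypothesis of Theorem~\ref{theorem:stability} the tangential Jacobian need not be hyperbolic and $\mathcal{E}$ need not consist of isolated points, yet the paper cites Perko's Stable Manifold Theorem (which is stated for hyperbolic equilibria) and then unions over $t\le 0$ as if $\mathcal{E}$ were a countable discrete set. Your proposed repair---pass to the center-stable manifold theorem to bound the local basin of each $\bar x$ in a codimension-$\ge 1$ submanifold, and use a locally finite cover of $\mathcal{E}$ (it sits inside the compact boundary $\partial\mathcal{X}_\epsilon$ intersected with a bounded sublevel set of $V$) together with $\sigma$-compactness to aggregate into a measure-zero global stable set---is the correct and more rigorous version of what the paper gestures at. In short, your proof is faithful to the paper's strategy, marginally tighter in two places, and correctly identifies where the original write-up leans on hyperbolicity it has not earned.
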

\begin{proof}
To prove {\it Item 1)}, we recall the local dynamics \eqref{eq:localDynamics} restricted to the set $\mathcal{P}=\{x:\nabla V(x)^\top\eta(x)\ge 0\}$ . We consider the following positive definite function: 
\begin{equation}
    W=\frac{1}{2}\|x-\bar{x}\|^2,
\end{equation}
where its time derivative, in view of the local dynamics \eqref{eq:localDynamics}, is given by:
\begin{equation}
     \dot{W}=-(x-\bar x)^\top(\mathbf{I}_n-\eta(x)\eta(x)^\top)\nabla V(x).\label{eq:wDot}
\end{equation}

We consider an arbitrary close point to $\bar{x}$ given by
\begin{equation}
    x=\bar{x}+\sigma\delta x,
\end{equation}
where $\delta x$ is a unite vector and $\sigma$ is an arbitrary small positive real. Using \eqref{eq:E}, we can write the equation of $\nabla V(x)$ as follows
\begin{align}\label{eq:perturbation}
    \nabla V(x)&=\nabla V(\bar{x})+\sigma\mathbf{H}_V(\bar{x})\delta x,\nonumber\\
    \nabla V(x)&= \lambda\eta(\bar{x})+\sigma\mathbf{H}_V(\bar{x})\delta x,
\end{align}

The gradient $\eta(x)$ can be approximated as follows
\begin{equation}\label{eq:gradientApprox}
    \eta(x)=\eta(\bar x)+\sigma\mathbf{H}_d(\bar x)\delta x+O(\sigma^2).
\end{equation}

We can rewrite (\ref{eq:wDot}) by replacing with (\ref{eq:perturbation}) and (\ref{eq:gradientApprox}), which results in the following:
\begin{multline}\label{eq:wDotperturbed}
    \dot{W}=-\sigma\delta x^\top[ 
    \Pi(\bar x)-\\ \sigma\eta(\bar x)\delta x^\top\mathbf{H}_d(\bar x)-\sigma\mathbf{H}_d(\bar x)\delta x\eta(\bar x)^\top-\\ \sigma^2\mathbf{H}_d(\bar x)\delta x(\mathbf{H}_d(\bar x)\delta x)^\top
    ](\lambda\eta(\bar x)+\sigma\mathbf{H}_V(\bar{x})\delta x).
\end{multline}

We ignore third order terms or higher. We cancel the term $\Pi(\bar x)\eta(\bar x)$ since $\eta(\bar x)$ is orthogonal to the tangent hyperplane defined at $\bar x$. We can approximate the distance function as follows:
\begin{equation}
    \mathbf{d}(x)=\mathbf{d}(\bar x)+\sigma\eta(\bar x)^\top\delta x,
\end{equation}
Since $x\in\mathcal{B}\cap\mathcal{P}\cap\partial\mathcal{X}_{\epsilon}$, then $\mathbf{d}(x)=\mathbf{d}(\bar x)=0$. Therefore, $\eta(\bar x)^\top\delta x=0$. After simplification, the time derivative of the Lyapunov-like function takes the form
\begin{equation}\label{eq:wDotFinal}
    \dot{W} = \sigma^2 \delta x^\top (\lambda \mathbf{H}_d(\bar{x}) - \mathbf{H}_V(\bar{x})) \delta x.
\end{equation}
Using the expressions of the normal curvatures in \eqref{eq:curvature-obstacle}-\eqref{eq:curvature-levelset}, this can be rewritten as
\begin{equation}
    \dot{W} = \sigma^2 \lambda \left( C_{\partial \mathcal{O}_\epsilon}(\delta x) - C_{L_{\bar{x}}}(\delta x) \right).
\end{equation}
Therefore, if there exists a direction \( v(\bar{x}) \in T_{\bar{x}} \partial \mathcal{O}_\epsilon \) such that
\begin{equation}
    C_{\partial \mathcal{O}_\epsilon}(v(\bar{x})) > C_{L_{\bar{x}}}(v(\bar{x})),
\end{equation}
then \( \dot{W} > 0 \) along that perturbation direction, and the equilibrium point \( \bar{x} \) is unstable.

For {\it Item 2)}, we prove that the basin of attraction of the undesired equilibria is a set of measure zero. We start by recalling the equation of \eqref{eq:jacobianUndesiredEq} evaluated at $\bar{x}$ of $\kappa(x)$ restricted to the set $\mathcal{B}\cap\mathcal{P}\cap\partial\mathcal{X}_\epsilon$. As shown previously, if the curvature condition is satisfied, then there exist a direction $\delta x$ such that $\eta(\bar{x})^\top\delta x=0$ and $\delta x^\top (\lambda\mathbf{H}_d(\bar{x})-\mathbf{H}_V(\bar{x})) \delta x>0$. Therefore, we have that
\begin{equation}
    \delta x^\top J(x) \delta x>0.
\end{equation}
Since the Jacobian $J(x)$ as computed in \eqref{eq:jacobian} is a symmetric matrix, then it admits at least one positive eigenvalue. We denote by $\phi_t$ the flow of the closed-loop dynamical system \eqref{eq:closedLoopsystem}, and the stable manifold $\mathcal{S}$ for each undesired equilibrium point which satisfies
    \begin{equation}
        \lim_{t\to\infty}\phi_t(c)=\bar x,\hspace{4mm} \forall c\in\mathcal{S},\hspace{4mm}\text{where }\bar x\in\mathcal{E}. 
    \end{equation}
If the curvature condition is satisfied, then the Jacobian of $\kappa(x)$ evaluated locally at a point $\bar x\in\mathcal{E}$ admits at least one positive eigenvalue. Hence, the stable manifold $\mathcal{S}$ is at most $(n-1)$-dimensional manifold \cite[The Stable Manifold Theorem, Pg 107]{Perko2001}, and as a result, it is measure zero in the $n$-dimensional space. Since the closed-loop system \eqref{eq:closedLoopsystem} admits unique solutions, the global stable manifold at $\bar x\in\mathcal{E}$, defined as \cite[Definition 3, Pg 113]{Perko2001}
\begin{equation}
        W^s(\bar x)=\bigcup_{t\le0}\phi_t(\mathcal{S}),
\end{equation}
is also a measure zero set.
\end{proof}

\begin{figure}[h!]
    \centering
    \includegraphics[width=0.5\linewidth]{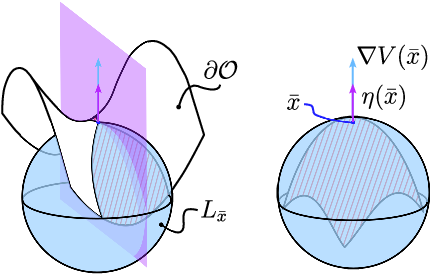}
    \caption{An illustration showcasing the curvature condition \eqref{eq:CurvatureCondition} for 3D obstacles. The curvature condition can be satisfied for convex obstacles (right) or non-convex obstacles such as saddle-shaped obstacles (left) provided that there exist a direction for which the obstacle's boundary sharply curved than the level set $L_{\bar{x}}$.}
    \label{fig:curvature3D}
\end{figure}

Theorem~\ref{theorem:stability} tackles the stability of the desired goal under the dynamical system~\eqref{eq:dynamicalSystem} with the controller~\eqref{eq:smoothControl}. Similarly to Proposition~\ref{proposition:isolatedEq}, the theorem employs a geometric criterion, stated in terms of the normal curvature, to characterize the instability of the undesired equilibria and the AGAS of the desired goal. 

In a 2D environment, an obstacle boundary at a given point has only one tangential direction. Consequently, if the instability condition~\eqref{eq:CurvatureCondition} holds at that point, then the isolation condition~\eqref{eq:CurvatureConditionIsolated} is also automatically satisfied, and all the undesired equilibria are isolated and unstable. In 3D, on the other hand, the tangent space is two-dimensional, and the curvature conditions must be verified in all directions lying in that plane. Therefore, it is possible for the instability condition~\eqref{eq:CurvatureCondition} to hold in a given direction, ensuring local instability, while the isolation condition~\eqref{eq:CurvatureConditionIsolated} fails in another, potentially admitting a continuum of undesired equilibria. In such cases, the undesired equilibria may be unstable yet not isolated. This scenario does not contradict the AGAS result, since the set of initial conditions converging to such equilibria remains of measure zero. 

Overall, Theorem~\ref{theorem:stability} establishes that if the boundary of the obstacle is locally more curved than the level set of the potential function, in at least one tangential direction at each undesired equilibrium, then the instability of all such equilibria is guaranteed. As a result, complex geometric configurations in 3D can be employed while preserving stability guarantees, as illustrated in Figure \ref{fig:curvature3D}.

\begin{rem}\label{remark:multObs}
    A natural extension of our framework consists in incorporating multiple obstacles into the unconstrained optimization problem~\eqref{eq:optimization} by summing several penalty terms. While the original formulation considers only the {\it closest obstacle}, we can generalize the approach by assigning to each sufficiently close obstacle~$\mathcal{O}_i$ a distance function~$d_i(x)$ and an associated unit normal~$\eta_i(x)$. The resulting optimization problem becomes:
    \begin{equation}\label{eq:optimizationMultipleObstacles}
        \min_u \, \|u - \kappa_0(x)\|^2 + \sum_i \psi(d_i(x), s_i(x))(u^\top \eta_i(x))^2,
    \end{equation}
    where $s_i(x) := \kappa_0(x)^\top \eta_i(x)$. However, the theoretical guarantees of almost global asymptotic stability (AGAS) established in Theorem~\ref{theorem:stability} no longer hold under this formulation, as the inclusion of multiple penalty terms may introduce undesired local minima. 

    One possible remedy is to adopt an adaptive strategy that selectively activates a subset of the penalty terms by considering only the closest obstacles, thus progressively reducing~\eqref{eq:optimizationMultipleObstacles} to the single-obstacle case, where AGAS is recovered. Nevertheless, such a strategy requires a detection mechanism capable of identifying the presence of local minima and triggering appropriate adjustments to the active penalty set. Developing such mechanisms is an important direction for future work.
\end{rem}

\begin{rem}[Instability of undesired equilibria]\label{remark:curvature}
The instability condition in Theorem~\ref{theorem:stability} relies on a comparison between the normal curvature of the obstacle boundary and that of the level sets of the potential function \(V(x)\). While our framework accommodates general smooth potentials, this condition may fail to hold for a given nominal choice, leading to undesired stable equilibria. In such cases, one can intentionally adjust the potential to reduce the curvature of its level sets and recover instability.

For instance, consider the case where the nominal controller is derived from a quadratic potential of the form
\[
V(x) = \frac{1}{2}(x - x_d)^\top P (x - x_d),
\]
with \(P \succ 0\). The curvature of the level sets is directly determined by the eigenstructure of \(P\). By selecting \(P\) with smaller eigenvalues along directions tangent to the obstacle boundary at an undesired equilibrium, one can effectively flatten the level sets locally and ensure that the curvature of the obstacle dominates, thereby satisfying the instability condition~\eqref{eq:CurvatureCondition}.

However, this strategy presumes the ability to identify situations in which the curvature condition is not satisfied. In practice, this can be addressed through an \emph{online detection mechanism} that monitors local geometric cues, such as the alignment between the gradient \(\nabla V(x)\) and the obstacle normal \(\eta(x)\), or the absence of divergence in the closed-loop flow near the boundary. Upon detection of a potential tangency or convergence to an undesired equilibrium, the metric \(P\) (in the case of quadratic potentials) can be locally adjusted to reduce the level set curvature and restore the instability condition. The development of such adaptive shaping mechanisms remains an interesting direction for future research.
\end{rem}

\begin{figure}[t]
\centering
    \includegraphics[trim={1.8cm 0.4cm 1.2cm 0.95cm},clip,width=\textwidth]{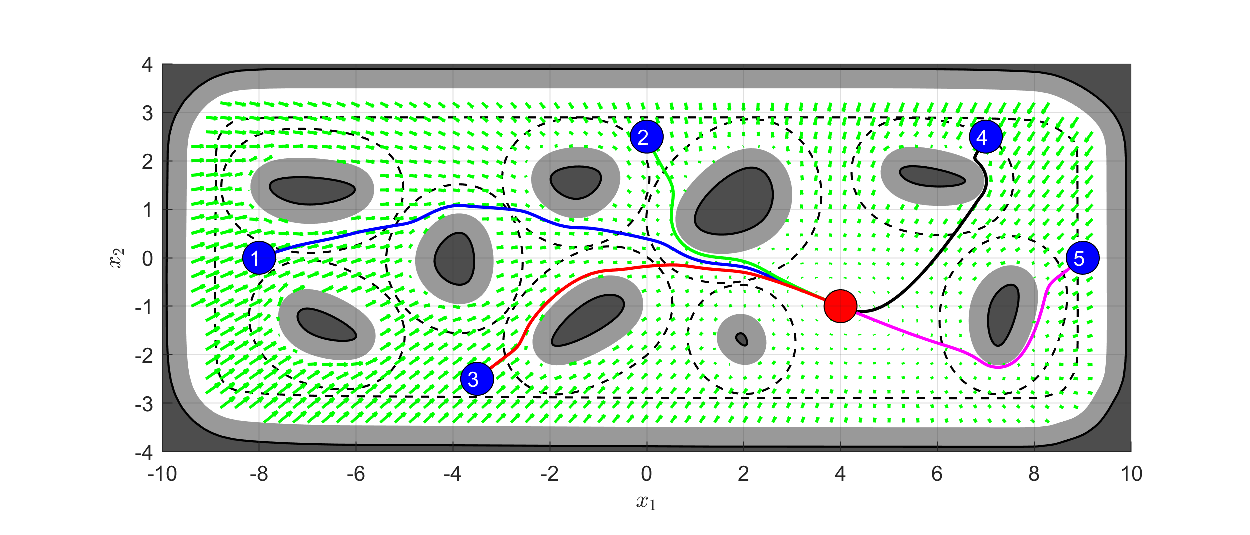}
    \caption{The trajectories of the robot in a 2D environment starting from a set of initial positions (blue) away from the goal (red) while avoiding the obstacles (dark gray). The (light gray) regions are the dilation of the obstacles (dark gray) by the distance $R+\epsilon$. The (black) dashed lines are a dilation by the geometric parameters $\mu$. The (colored) paths represent the resulting trajectories under our approach and the (green) arrows represent the corresponding vector field.
    See video: \href{https://youtu.be/3WpzqI5blOg}{https://youtu.be/3WpzqI5blOg} }
    \label{fig:2Dsimulation}
\end{figure}

\section{Simulation  results}\label{section:Simulation results}

\begin{figure}[h!]
    \centering

    \begin{subfigure}[b]{0.5\linewidth}
        \centering
        \includegraphics[width=\linewidth]{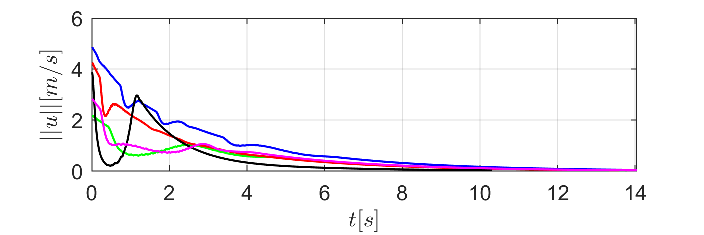}
        \caption{}
        \label{fig:u}
    \end{subfigure}
    \vfill
    \begin{subfigure}[b]{0.5\linewidth}
        \centering
        \includegraphics[width=\linewidth]{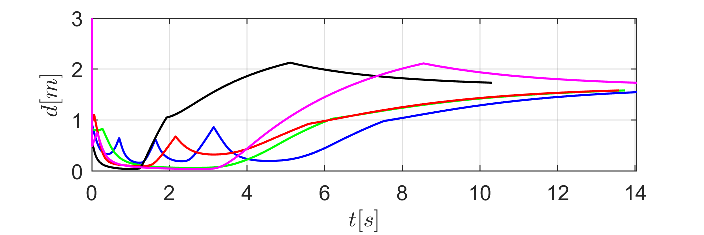}
        \caption{}
        \label{fig:d1}
    \end{subfigure}

    \vfill
    \begin{subfigure}[b]{0.5\linewidth}
        \centering
        \includegraphics[width=\linewidth]{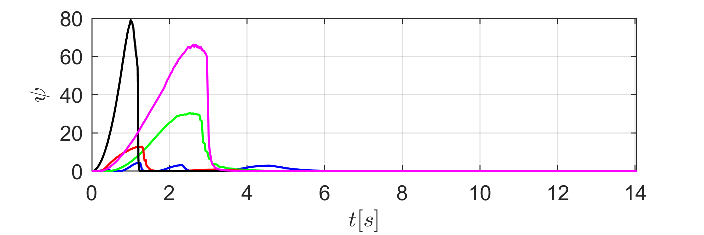}
        \caption{}
        \label{fig:psi1}
    \end{subfigure}

    \caption{Time evolution of (\subref{fig:u}) the control input, (\subref{fig:d1}) the distance to the obstacles and (\subref{fig:psi1}) the penalty scaling function for the 2D case and the initial conditions: 1 in (blue), 2 in (green), 3 in (red), 4 in (black) and 5 in (magenta).}
    \label{fig:u d psi}
\end{figure}

In this section, we demonstrate the performance of the proposed smooth controller in guaranteeing safe navigation in unknown environments by running both 2D and 3D simulations. The objective is to validate the theoretical results derived in previous sections, particularly the safety and convergence guarantees provided by the control law.

For this purpose, we consider the following quadratic positive definite function:
\begin{equation}
V(x) = \frac{1}{2}(x - x_d)^\top \mathbf{P}(x - x_d),
\end{equation}
where $\mathbf{P} \in \mathbb{R}^{n \times n}$ is a positive definite matrix. This leads to the following expression for the nominal controller introduced in \eqref{eq:nominalControl}:
\begin{equation}
\kappa_0(x) = -\mathbf{P}(x - x_d).
\end{equation}
As previously introduced in Remark~\ref{remark:PsiChoice}, the penalty scaling function defined by \eqref{eq:psi} is employed in this numerical application. To model limited onboard sensing capabilities, we simulate the robot’s perception using idealized LiDAR sensors in both 2D and 3D environments. Specifically, we assume the availability of a 2D planar LiDAR in two-dimensional settings, which provides radial distance measurements along uniformly distributed rays over $360^\circ$ within a fixed maximum range $R_s$. In 3D, we simulate a spherical LiDAR that returns point cloud data sampled uniformly, covering the robot’s surroundings up to a sensing radius $R_s$. In both cases, the sensor is characterized by its range $R_s$ and an angular resolution. 

For the 2D case, we consider the free space illustrated in Figure \ref{fig:2Dsimulation} where the obstacles boundaries are defined by interpolating a set of points with cubic splines. We take the desired goal position at $x_d = (4,-1)$, the robot's radius $R = 0.34$, the controller design parameter $\epsilon = 0.06$, $\mu=0.6$, $\nu = 1$ and the gain matrix $\mathbf{P} = [\begin{smallmatrix} .4 & .2 \\ .2 & .8 \end{smallmatrix}]$. The maximum range of the 2D sensor is $R_s=3$ and its resolution is $1^\circ$. In the 3D case, we consider similar parameters as in the 2D simulation example except for the gain matrix $\mathbf{P} = \Big[\begin{smallmatrix} 1 & 0 & 0 \\ 0 & 1 & .5\\ 0 &.5 &2 \end{smallmatrix}\Big]$, the 3D LiDAR resolution of $2^\circ$ and the goal $(4,7,1)$. Figure \ref{fig:3Dsimulation} illustrates the resulting trajectories for a 3D environment filled with convex and non-convex obstacles. 

\begin{figure}[h!]
\centering
    \begin{subfigure}[b]{0.48\linewidth}
        \centering
        \includegraphics[width=\linewidth]{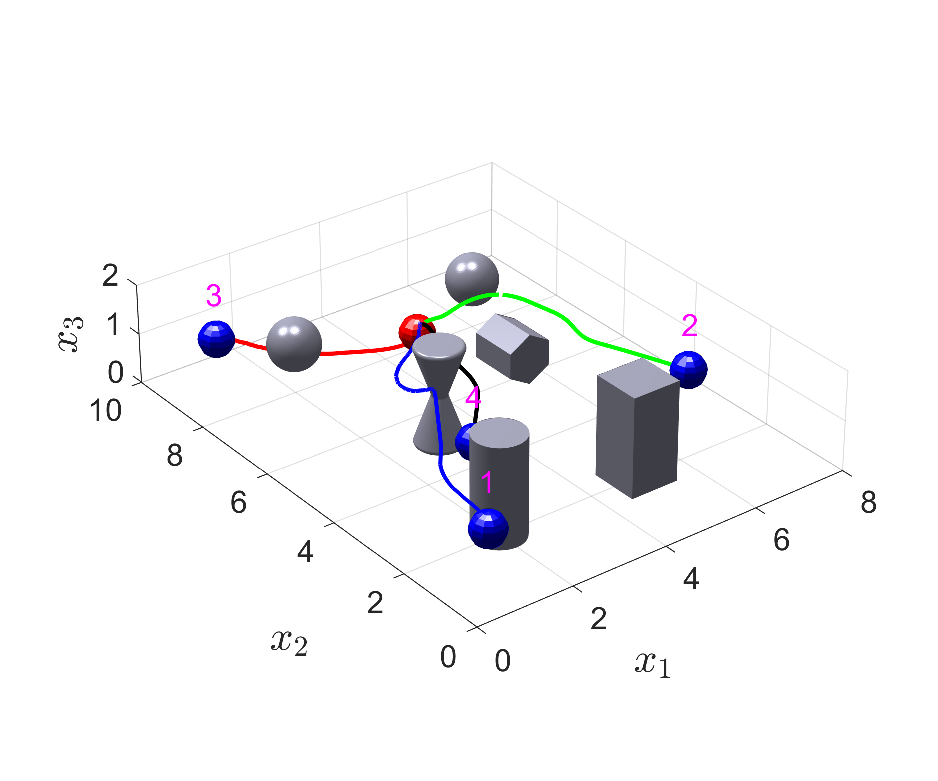}
        \caption{Perspective view.}
    \end{subfigure}
    \hfill
    \begin{subfigure}[b]{0.48\linewidth}
        \centering
        \includegraphics[width=\linewidth]{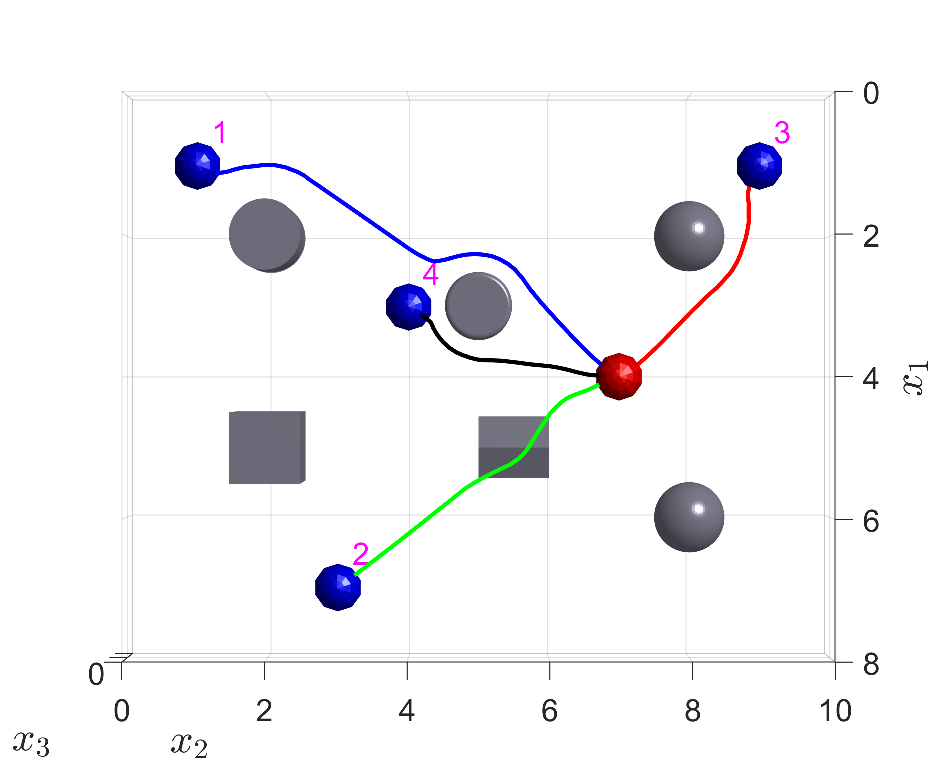}
        \caption{Top view.}
    \end{subfigure}

    \caption{Resulting trajectories (blue) of the robot under our approach in a 3D environment filled with convex and non-convex obstacles starting at a set of initial positions (blue) away from the goal (red) while avoiding the obstacles (gray). See video: \href{https://youtu.be/WZ40ftBPudk}{https://youtu.be/WZ40ftBPudk}}
    \label{fig:3Dsimulation}
\end{figure}
\begin{figure}[h!]
    \centering

    \begin{subfigure}[b]{0.5\linewidth}
        \centering
        \includegraphics[width=\linewidth]{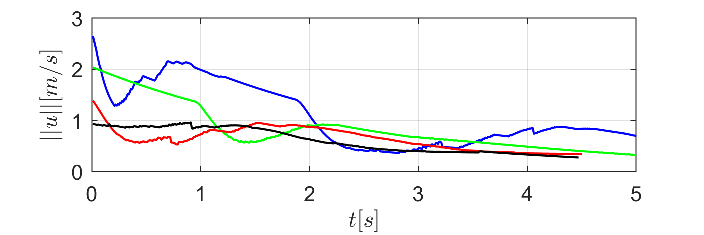}
        \caption{}
        \label{fig:u 3D}
    \end{subfigure}
    \vfill
    \begin{subfigure}[b]{0.5\linewidth}
        \centering
        \includegraphics[width=\linewidth]{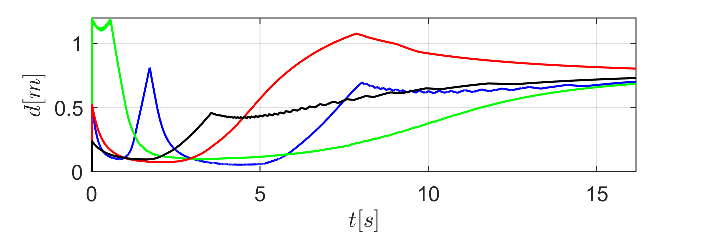}
        \caption{}
        \label{fig:d1 3D}
    \end{subfigure}

    \vfill
    \begin{subfigure}[b]{0.5\linewidth}
        \centering
        \includegraphics[width=\linewidth]{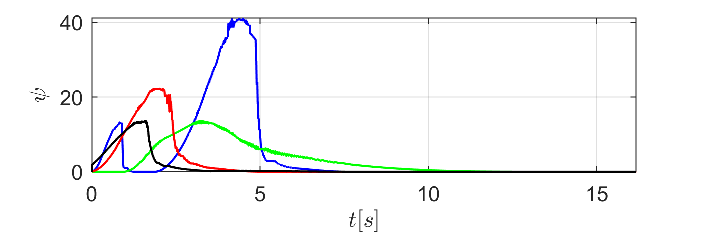}
        \caption{}
        \label{fig:psi1 3D}
    \end{subfigure}

    \caption{time evolution of (\subref{fig:u 3D}) the control input, (\subref{fig:d1 3D}) the distance to the obstacles and (\subref{fig:psi1 3D}) the penalty scaling function for the 3D case and the initial conditions: 1 in (blue), 2 in (green), 3 in (red) and 4 in (black).}
    \label{fig:u d psi 3D}
\end{figure}

Figures~\ref{fig:u d psi} and~\ref{fig:u d psi 3D} illustrates the temporal evolution of the key quantities involved in the closed-loop behavior. The minimum distance profiles consistently remain above the safety threshold, demonstrating that the robot successfully maintains a collision-free trajectory under all initial conditions. The behavior of the penalty scaling function further illustrates its activation mechanism where its value increases as the robot approaches the dilated boundary of an obstacle and when the nominal control direction aligns with the obstacle’s normal.

\section{Conclusion}\label{section: Conclusion}
This paper introduced SPF (Safe Penalty-based Feedback), a novel penalty-based framework for reactive obstacle avoidance that guarantees both safety and convergence using only local sensory information. By casting the controller as the closed-form solution of an unconstrained optimization problem, we obtained a smooth feedback law that minimally deviates from a given nominal controller while ensuring collision avoidance.

The theoretical results establish safety guarantees for arbitrary nominal controllers and ensure almost global asymptotic stability (AGAS) under a curvature condition when the nominal controller is a gradient descent of a potential function. Specifically, we require that, at any undesired equilibrium, there exists a tangent direction along which the curvature of the level set of the potential function is smaller than the curvature of the obstacle boundary. The approach was validated through 2D and 3D simulations, which demonstrate smooth, safe, and convergent behavior in cluttered environments, and highlight the simplicity of implementation.

Future directions include extending the method to handle multiple close obstacles as discussed in Remark~\ref{remark:multObs}, developing adaptive mechanisms to adjust the curvature of the level sets as discussed in \ref{remark:curvature}, and investigating both theoretical and practical benefits of such adaptations. Another promising extension involves adapting the framework to systems with higher-order dynamics and accounting for moving obstacles.


\bibliographystyle{unsrtnat}
\bibliography{references}  






\end{document}